  \newtheorem{lemma}{Lemma}
  \newtheorem{lemma}{Lemma}[section]
  \newtheorem*{lemma*}{Lemma}
  \newtheorem{prop}[lemma]{Proposition}
  \newtheorem*{prop*}{Proposition}
  \newtheorem{thm}[lemma]{Theorem}
  \newtheorem{thmA}{Theorem}
  \newtheorem*{thm*}{Theorem}
\theoremstyle{definition}
  \newtheorem{defi}[lemma]{Definition}
  \newtheorem*{defi*}{Definition}
\theoremstyle{remark}
  \newtheorem{rem}[lemma]{Remark}
  \newtheorem{ex}[lemma]{Example}
  \newtheorem*{ex*}{Example}
\newcommand{\pd}[1]{\frac{\partial}{\partial #1}}  
\newcommand{\pdd}[2]{\frac{\partial #1}{\partial #2}}  
\DeclareMathOperator{\sgn}{sgn}
\DeclareMathOperator{\supp}{supp}
\newcommand{\Ecal}{\mathcal{E}}
\newcommand{\Mcal}{\mathcal{M}}
\newcommand{\Xcal}{\mathcal{X}}
\newcommand{\Ycal}{\mathcal{Y}}
\newcommand{\Zcal}{\mathcal{Z}}
\newcommand{\Cb}{\mathbb{C}}
\newcommand{\Rb}{\mathbb{R}}
\newcommand{\Zb}{\mathbb{Z}}
\newcommand{\Phat}{\hat{P}}
\newcommand{\Qhat}{\hat{Q}}
\newcommand{\ol}{\overline}
\newcounter{Beweisschritt}
\newcommand{\Schritt}[1][\theBeweisschritt]{\setcounter{Beweisschritt}{#1}(\theBeweisschritt) \addtocounter{Beweisschritt}{1}}
\newcounter{Beweissubschritt}
\author{Johannes Rauh\footnote{Max Planck Institute for Mathematics in the Sciences, Inselstraße 22, D-04109 Leipzig, Germany}}
\title{Finding the Maximizers of the Information Divergence from an Exponential Family}
\newcommand{\KL}{information divergence}
\newcommand{\PEcal}{\ensuremath{P_{\Ecal}}}
\newcommand{\DE}{\ensuremath{D(\cdot||\Ecal)}}
\newcommand{\Hr}{\ensuremath{H_{r}}}
\newcommand{\Dbarr}{\ensuremath{\ol D_{r}}}
\newcommand{\smo}{o}
\renewcommand{\Im}{\ensuremath{\mathfrak{Im}}}
\renewcommand{\Re}{\ensuremath{\mathfrak{Re}}}
\newcommand{\keywordname}{\textit{Keywords:}}
\newcommand{\keywords}[1]{\par\addvspace\baselineskip
\noindent\keywordname\enspace\ignorespaces#1}
\begin{document}
\maketitle

\begin{abstract}
  This paper investigates maximizers of the information divergence from an exponential family $\Ecal$.  It is shown that
  the $rI$-projection of a maximizer $P$ to $\Ecal$ is a convex combination of $P$ and a probability measure $P_{-}$
  with disjoint support and the same value of the sufficient statistics $A$.  This observation can be used to transform
  the original problem of maximizing $\DE$ over the set of all probability measures into the maximization of a function
  $\Dbarr$ over a convex subset of $\ker A$.  The global maximizers of both problems correspond to each other.
  Furthermore, finding all local maximizers of $\Dbarr$ yields all local maximizers of $\DE$.

  This paper also proposes two algorithms to find the maximizers of $\Dbarr$ and applies them to two examples, where the
  maximizers of $\DE$ were not known before.

  \keywords{Information divergence, relative entropy, exponential family, optimization, binomial equations.}
\end{abstract}

\section{Introduction}
\label{sec:intro}

Let $\Xcal$ be a finite set of cardinality $N$ and consider an \emph{exponential family} $\Ecal$ on $\Xcal$.  In this
work this will mean that there exists a real-valued $h\times N$ matrix $A$ (whose columns $A_{x}$ are indexed by
$x\in\Xcal$) and a reference measure $r$ on $\Xcal$ satisfying $r(x)>0$ for all $x\in\Xcal$ such that $\Ecal$ consists
of all probability measures on $\Xcal$ of the form
\begin{equation}
  \label{eq:expfam}
  P_{\theta}(x) = \frac{r(x)}{Z_{\theta}} \exp\left(\sum_{i=1}^{h}\theta_{i}A_{i,x}\right).
\end{equation}
In this formula $\theta\in\Rb^{h}$ is a vector of parameters and $Z_{\theta}$ ensures normalization.  The matrix $A$ is
called the \emph{sufficient statistics} of $\Ecal$.  For technical reasons it will be assumed that the row span of $A$
contains the constant vector $(1,\dots,1)$, see section \ref{sec:expfam-and-KL}.  The topological closure of $\Ecal$
will be denoted by $\ol \Ecal$.

The \emph{information divergence} (also known as the \emph{Kullback-Leibler divergence} or \emph{relative entropy}) of two probability
distributions $P$, $Q$ is defined as
\begin{equation}
  \label{eq:def:infdiv}
  D(P||Q) = \sum_{x\in\Xcal}P(x)\log\left(\frac{P(x)}{Q(x)}\right).
\end{equation}
Here we define $0\log 0 = 0\log(0/0) = 0$.  It is strictly positive unless $P=Q$, and it is infinite if the support of
$P$ is not contained in the support of $Q$.

With these definitions Nihat Ay proposed the following problem, motivated by probabilistic models for evolution and
learning in neural networks based on the infomax principle \cite{Ay02:Pragmatic_structuring}:
\begin{itemize}
\item Given an exponential family $\Ecal$, which probability measures $P$ maximize $D(P||\Ecal)$?
\end{itemize}
Here $D(P||\Ecal) = \inf_{Q\in\Ecal}D(P||Q)$.

Already \cite{Ay02:Pragmatic_structuring} contains a lot of properties of the maximizers, like the projection property
and support restrictions, but only in the case where the $rI$-projection $P_{\Ecal}$ of the maximizer lies in $\Ecal$.
The projection property means that the maximizer $P$ satisfies $P(x) = P_{\Ecal}(\Zcal)P_{\Ecal}(x)$ for all $x\in\Zcal
:= \supp(P)$.  In \cite{Matus07:Optimality_conditions} Franti\v{s}ek Mat\'{u}\v{s} computed the first order optimality
conditions in the general case, showing that the projection property also holds if $P_{\Ecal}\in\ol\Ecal\setminus\Ecal$.
For further results on the maximization problem see \cite{AyKnauf06:Maximizing_Multiinformation,
  Matus09:Divergence_from_factorizable_distributions, MatusAy03:Maximization_of_the_Info_Div_from_Exp_Fam}.

In this work it is shown that the original maximization problem can be solved by studying the related problem:
\begin{itemize}
\item Maximize the function $\Dbarr(u) = \sum_{x\in X}u(x)\log\frac{|u(x)|}{r(x)}$ for all $u\in\ker A$ such that
  $||u||_{1}\le 2$ and $\sum_{x}u_{x}=0$.
\end{itemize}
Here, $||u||_{1}$ is the $\ell_{1}$-norm of $u$.  Theorem \ref{thm:Dualmaxi} will show that there is a bijection between
the global maximizers of these two maximization problems.  Furthermore, knowing all local maximizers of $\Dbarr$ yields
all local maximizers of $\DE$.  This relation is a consequence of the projection property mentioned above.

In Section \ref{sec:expfam-and-KL} some known properties of exponential families and the \KL{} are collected, including
Matú\v{s}'s result on the first order optimality conditions of maximizers of $\DE$.  In Section \ref{sec:projpoints} the
projection property is analyzed.  It is easy to see that probability measures that satisfy the projection property and
that do not belong to $\Ecal$ come in pairs $(P_{+},P_{-})$ such that $P_{+}-P_{-}\in\ker A\setminus\{0\}$.  This
pairing is used in Section \ref{sec:KernelDecomp} to replace the original problem by the maximization of the function
$\Dbarr$.
%
Theorem \ref{thm:Dualmaxi} in this section investigates the relation between the maximizers of both problems. %
In Section \ref{sec:firstorder} the first order conditions of $\Dbarr$ are computed.
Section \ref{sec:codim1} discusses the case where $\dim\ker A =1$, demonstrating how the reformulation leads to a quick
solution of the original problem.
Section~\ref{sec:algo} gives some ideas how to solve the critical equations from Section \ref{sec:firstorder}.
Section~\ref{sec:projpts} presents an alternative method of computing the local maximizers of $\DE$, which uses the
projection property more directly.  Sections~\ref{sec:algo} and~\ref{sec:projpts} contain two examples which demonstrate
how the theory of this paper can be put to practical use.

\section{Exponential families and the \KL}
\label{sec:expfam-and-KL}

The definition of an exponential family, as it will be used in this work, was already stated in the introduction.  It is
important to note that the correspondence between exponential families $\Ecal$ on one side and sufficient statistics $A$
and reference measure $r$ on the other side is not unique.  One reason for this lies in the normalization of probability
measures: We can always add a constant row to the matrix $A$ without changing $\Ecal$ (as a set).  For this reason in
the following it will be assumed that $A$ contains the constant row $(1,\dots,1)$ in its row space.  This implies that
every $u\in\ker A$ satisfies $\sum_{x\in\Xcal}u(x)=0$.

In order to characterize the remaining ambiguity in the parametrization $(r,A)\mapsto\Ecal$, denote by $\Ecal_{r,A}$ the
exponential family associated to a given matrix $A$ and a given reference measure.  Then $\Ecal_{r,A}=\Ecal_{r',A'}$ as
sets if and only if the following two conditions are satisfied:
\begin{itemize}
\item $r\in\Ecal_{r',A'}$.
\item The row span of $A$ equals the row span of $A'$.
\end{itemize}

The introduction also featured the definition of the \KL.  In the following we will also use formula
\eqref{eq:def:infdiv} for positive measures $Q$ which are not necessarily normalized.  In this case
\begin{equation}
  \label{eq:infdiv-lambda}
  D(P||\lambda Q) = \sum_{x}P(x)\log\frac{P(x)}{\lambda Q(x)} = D(P||Q) - \log\lambda\quad\text{ for all }\lambda > 0,
\end{equation}
where $\sum_{x}P(x)=1$ was used.

The following theorem sums up the main facts about exponential families:
\begin{thmA}
  \label{thmA:rI-projection}
  Let $P$ be a probability measure on $\Xcal$.  Then there exists a unique probability measure $P_{\Ecal}$ in $\ol
  \Ecal$ such that $AP = AP_{\Ecal}$.  Furthermore, $P_{\Ecal}$ has the following properties:
  \begin{enumerate}
  \item For all $Q\in\Ecal$
    \begin{equation}
      \label{eq:Pythident}
      D(P||Q) = D(P||P_{\Ecal}) + D(P_{\Ecal}||Q).
    \end{equation}
  \item $\PEcal$ satisfies
    \begin{equation}
      \label{eq:D=H-H}
      D(P||\Ecal) = H_{r}(\PEcal) - H_{r}(P)
    \end{equation}
  \item $P_{\Ecal}$ maximizes the concave function
    \begin{equation}
      \Hr(Q) := - \sum_{x} Q(x)\log \frac{Q(x)}{r(x)}
    \end{equation}
    subject to the condition $AQ = AP$.
  \end{enumerate}
\end{thmA}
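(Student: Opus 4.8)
The plan is to make part~(3) the foundation and deduce everything else from it. First I would introduce the set $\mathcal{M}$ of all probability measures $Q$ on $\Xcal$ with $AQ = AP$; this is a non-empty (it contains $P$), compact, convex polytope, and $H_{r}$ is continuous and strictly concave on the simplex, so $H_{r}$ has a unique maximizer on $\mathcal{M}$, which I name $P_{\Ecal}$. That settles the existence part of~(3). An elementary but essential preliminary: $\supp P \subseteq \supp P_{\Ecal}$, and in fact $\supp Q \subseteq \Ycal := \supp P_{\Ecal}$ for \emph{every} $Q \in \mathcal{M}$ --- otherwise, moving from $P_{\Ecal}$ towards such a $Q$ along the connecting segment would increase $H_{r}$ with infinite initial slope, because of the $-t\log t$ contribution of a coordinate leaving $\Ycal$, contradicting maximality.

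Next I would invoke the first-order conditions. Since $P_{\Ecal}$ has full support $\Ycal$ and, by the preliminary step, every element of $\mathcal{M}$ is supported in $\Ycal$, the point $P_{\Ecal}$ lies in the relative interior of $\mathcal{M}$, so stationarity involves only the multipliers $c \in \Rb^{h}$ of $AQ = AP$ and $\mu$ of $\sum_{x}Q(x) = 1$: $-\log(P_{\Ecal}(x)/r(x)) - 1 = (c^{T}A)_{x} + \mu$ for $x \in \Ycal$. Since $(1,\dots,1)$ lies in the row span of $A$, the constant $\mu+1$ is absorbed into the linear term, giving $P_{\Ecal}(x) = r(x)\exp((c^{T}A)_{x})$ on $\Ycal$ and $P_{\Ecal}(x)=0$ off $\Ycal$; thus $P_{\Ecal}$ has exponential-family form on its support, and since $\Ycal$ turns out to be a face support (this follows from the ``no charge outside $\Ycal$'' property via linear-programming duality; see the last paragraph) one concludes $P_{\Ecal} \in \ol\Ecal$.

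Parts~(1) and~(2) are then pure algebra. For $Q = P_{\theta} \in \Ecal$, expanding the definitions gives $D(P||Q) = -H_{r}(P) - \theta^{T}(AP) + \log Z_{\theta}$ and, since $\supp P_{\Ecal}\subseteq\supp Q$, likewise $D(P_{\Ecal}||Q) = -H_{r}(P_{\Ecal}) - \theta^{T}(AP_{\Ecal}) + \log Z_{\theta}$; subtracting and using $AP = AP_{\Ecal}$ yields $D(P||Q) - D(P_{\Ecal}||Q) = H_{r}(P_{\Ecal}) - H_{r}(P)$, independent of $Q$. Running the same computation with $P_{\Ecal}$ in place of $Q$ --- legitimate now that $P_{\Ecal}$ has exponential-family form on $\Ycal \supseteq \supp P$ --- gives $D(P||P_{\Ecal}) = H_{r}(P_{\Ecal}) - H_{r}(P)$, which simultaneously proves~(1) and half of~(2). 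For the remaining half, part~(1) gives $D(P||\Ecal) = D(P||P_{\Ecal}) + \inf_{Q\in\Ecal}D(P_{\Ecal}||Q)$, so it only remains to see that $\inf_{Q\in\Ecal}D(P_{\Ecal}||Q) = 0$.

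I expect this last point, together with the uniqueness claim in the opening assertion, to be the only genuinely delicate step --- the one place where $\ol\Ecal \setminus \Ecal$ is really felt. The plan is to build $P_{\theta_{n}} \in \Ecal$ with $P_{\theta_{n}} \to P_{\Ecal}$ and $D(P_{\Ecal}||P_{\theta_{n}}) \to 0$: from the already-established fact that no $Q \in \mathcal{M}$ charges a point outside $\Ycal$, linear-programming duality produces a direction $g \in \Rb^{h}$ with $g^{T}A_{x}$ constant on $\Ycal$ and strictly larger off it; then $\theta_{n} = c - n g$ keeps the $\Ycal$-coordinates proportional to $P_{\Ecal}$ while driving the others to $0$, so $P_{\theta_{n}}(x)/P_{\Ecal}(x) \to 1$ uniformly on $\supp P_{\Ecal}$ and the divergence vanishes. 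For uniqueness, any $Q \in \ol\Ecal$ with $AQ = AP$ again has exponential-family form on a face and lies in $\mathcal{M}$, hence is supported in $\Ycal$ and has the same $rI$-projection $P_{\Ecal}$; the same recession construction gives $P'_{\theta_{n}} \in \Ecal$ with $P'_{\theta_{n}} \to Q$ and $D(Q||P'_{\theta_{n}}) \to 0$, so applying~(1) to $Q$ gives $0 = \lim D(Q||P'_{\theta_{n}}) = D(Q||P_{\Ecal}) + \lim D(P_{\Ecal}||P'_{\theta_{n}}) \ge D(Q||P_{\Ecal})$, forcing $Q = P_{\Ecal}$. Alternatively, all of this is classical, and one may instead cite Csisz{\'a}r's theory of information projections, together with Csisz{\'a}r--Mat{\'u}\v{s} for the behaviour on the closure, and keep the written proof to the algebra of the third paragraph.
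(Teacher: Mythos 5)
Your proposal is correct, but it takes a genuinely different route from the paper. The paper's proof is essentially a citation: existence, uniqueness and the Pythagorean identity~\eqref{eq:Pythident} are taken from Corollary~3.1 of Csisz\'ar--Shields, and then parts~2.\ and~3.\ drop out in two lines by writing $H_{r}(Q)=-D(Q||r)$ and applying \eqref{eq:Pythident} with $Q=r$ (normalized via~\eqref{eq:infdiv-lambda}), so that $D(P||r)=D(P||P_{\Ecal})+D(P_{\Ecal}||r)$ simultaneously gives $D(P||P_{\Ecal})=H_{r}(P_{\Ecal})-H_{r}(P)$ and the maximality of $H_{r}$ on the fiber $\{AQ=AP\}$. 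You instead invert the logical order: you \emph{define} $P_{\Ecal}$ as the $H_{r}$-maximizer on that fiber, extract its log-linear form from the first-order conditions (after the infinite-slope argument pinning down the common support $\Ycal$), verify \eqref{eq:Pythident} by direct cancellation of the $\theta^{T}AP$ terms, and handle membership in $\ol\Ecal$, the vanishing of $\inf_{Q\in\Ecal}D(P_{\Ecal}||Q)$, and uniqueness by an explicit recession direction supporting the face of $\Mcal$ containing $AP$. What your approach buys is a self-contained, constructive proof that makes visible exactly where the boundary $\ol\Ecal\setminus\Ecal$ enters (and it previews the face/hyperplane structure the paper only meets later, in Theorem~\ref{thmA:Matus}.2); its cost is that the uniqueness step still quietly leans on the classical description of $\ol\Ecal$ as a union of exponential families on facial supports (``$Q$ again has exponential-family form on a face''), which is itself the content of the cited Csisz\'ar--Mat\'u\v{s} theory --- so either that fact should be proved as a lemma or, as you note at the end, one falls back on the same citation the paper uses. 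Both the note that $P_{\Ecal}$ lies in the relative interior of the fiber polytope and the existence of the exposing functional $g$ are correct for polytopes, so no genuine gap remains once that one classical fact is granted.
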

\begin{proof}[Sketch of proof]
  Corollary 3.1 of \cite{CsiszarShields04:Information_Theory_and_Statistics} proves existence and uniqueness of
  $P_{\Ecal}$ and the ``Pythagorean identity''~\eqref{eq:Pythident}
  for all probability measures $P$ and all probability measures $Q\in\ol \Ecal$.  It follows from~\eqref{eq:infdiv-lambda}
  that
  \begin{equation}
    D(P||r) = D(P||P_{\Ecal}) + D(P_{\Ecal}||r),
  \end{equation}
  so statements~{2.} and~{3.} follow from $H_{r}(Q) = -D(Q||r)$.
\end{proof}
$P_{\Ecal}$ is called the \emph{$rI$-projection} of $P$ to $\Ecal$, or simply the \emph{projection} of $P$ to $\Ecal$.

Note that the function $H_{r}$ introduced in the theorem satisfies $H_{r}(P) = -D(P||r)$.  It can thus be interpreted as
a negative \emph{relative entropy}.
In this work $H_{r}$ is prefered to its negative counterpart in order to keep the connection to the entropy $H$ visible
in the important case that $r(x)=1$ for all $x\in\Xcal$.


The map associated to the matrix $A$ is called the \emph{moment map}.  It maps the set of all probability measures on
$\Xcal$ onto the polytope $\Mcal$ which is the convex hull of the columns of $A$.  This polytope is called the
\emph{convex support} of $\Ecal$.  In the special case that $\Ecal$ is a hierarchical model (see
\cite{Lauritzen96:Graphical_Models}), $\Mcal$ is called the \emph{marginal polytope} of $\Ecal$.

Note that we can associate a point $A_{x}\in\Mcal$ with each state $x\in\Xcal$.  Among these points are the vertices of
$\Mcal$, but not every point $A_{x}$ needs to be a vertex of $\Mcal$.


\begin{thmA}
  \label{thmA:Matus}
  Let $P_{+}$ be a (local) maximizer of $D(\cdot||\Ecal)$ with support $\Zcal=\supp(P_{+})$ and $P_{\Ecal}$ its
  $rI$-projection to $\Ecal$. Then the following holds:
  \begin{enumerate}
  \item $P_{+}$ satisfies the \emph{projection property}, i.e., up to normalization $P_{+}$ equals the restriction of
    $P_{\Ecal}$ to $\Zcal$:
    \begin{equation}
      \label{eq:defcritpt}
      P_{+}(x) = 
      \begin{cases}
        \frac{\PEcal(x)}{\PEcal(\Zcal)}, & \text{ if } x\in \Zcal,\\
        0, & \text{ else.}
      \end{cases}
    \end{equation}
  \item Suppose $\Ycal := \supp(P_{\Ecal})\neq\Xcal$.  Then the moment map maps $\Ycal$ and $\Xcal\setminus\Ycal$ into
    parallel hyperplanes.
  \item The cardinality of $\Zcal$ is bounded by $\dim\Ecal + 1$.
  \end{enumerate}
\end{thmA}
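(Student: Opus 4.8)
The plan is to extract all three statements from the identity $D(P||\Ecal)=\Hr(\PEcal)-\Hr(P)$ of Theorem~\ref{thmA:rI-projection}, rewritten as $D(P||\Ecal)=H_{r}^{*}(AP)-\Hr(P)$ with $H_{r}^{*}(m):=\max\{\,\Hr(Q):AQ=m\,\}$, and to analyse how $D(P_{t}||\Ecal)=H_{r}^{*}(AP_{t})-\Hr(P_{t})$ changes along perturbations $P_{t}$ of $P_{+}$. Two preliminary facts are used throughout. First, $\Zcal\subseteq\Ycal$: applying the Pythagorean identity~\eqref{eq:Pythident} to any $Q\in\Ecal$ (which has full support, so $D(P_{+}||Q)<\infty$) forces $D(P_{+}||\PEcal)<\infty$, hence $\supp(P_{+})\subseteq\supp(\PEcal)$. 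Second, if $\Ycal\neq\Xcal$ then $\PEcal\in\ol\Ecal\setminus\Ecal$, and by the standard description of the closure of a finite exponential family $\Ycal$ is an \emph{exposed} set: there is $w\in\Rb^{h}$ with $\Ycal=\{x:\langle A_{x},w\rangle=a\}$, $a:=\max_{x}\langle A_{x},w\rangle$. Equivalently, the moment map sends $\Ycal$ into the face $F:=\Mcal\cap\{\langle\cdot,w\rangle=a\}=\mathrm{conv}\{A_{x}:x\in\Ycal\}$ and $\Xcal\setminus\Ycal$ strictly below it, and $AP_{+}=A\PEcal$ lies in the relative interior of $F$.

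Parts~1 and~3 follow from perturbations supported on $\Zcal$. For $v$ supported on $\Zcal$ with $\sum_{x}v(x)=0$, put $P_{t}:=P_{+}+tv$; then $\supp(P_{t})\subseteq\Zcal$, and since $Av$ is tangent to the affine hull of $F$ while $AP_{+}$ lies in the relative interior of $F$, we have $AP_{t}\in F$ for small $|t|$. On $F$ the function $H_{r}^{*}$ coincides with the smooth, strictly concave maximum-entropy function of the exponential family $\Ecal^{\Ycal}$ on $\Ycal$, viewed as a function of its mean (any $Q$ with $AQ\in F$ must be supported on $\Ycal$), and $\Hr(P_{t})$ is smooth near $t=0$ because $P_{+}>0$ on $\Zcal$; hence $t\mapsto D(P_{t}||\Ecal)$ is differentiable at $0$, and as $\pm v$ are both admissible and $P_{+}$ is a local maximizer, the derivative vanishes:
\[
  0=\sum_{x\in\Zcal}v(x)\Bigl(\langle A_{x},\eta\rangle+\log\tfrac{P_{+}(x)}{r(x)}\Bigr),
\]
where $\eta$ is a canonical parameter of $\PEcal$ inside $\Ecal^{\Ycal}$, so that $\langle A_{x},\eta\rangle+\log\frac{\PEcal(x)}{r(x)}$ is constant on $\Ycal$. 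Letting $v$ range over all admissible directions forces $\langle A_{x},\eta\rangle+\log\frac{P_{+}(x)}{r(x)}$ to be constant on $\Zcal$; subtracting the two relations gives $P_{+}\propto\PEcal$ on $\Zcal$, which is part~1 after normalizing. For part~3 take $v=u\in\ker A$ supported in $\Zcal$: now $AP_{t}=AP_{+}$, so $D(P_{t}||\Ecal)=H_{r}^{*}(AP_{+})+D(P_{t}||r)$, and since $t=0$ maximizes the strictly convex map $t\mapsto D(P_{t}||r)$ its second derivative $\sum_{x}u(x)^{2}/P_{+}(x)$ must be $\le0$, forcing $u=0$. Hence no nonzero element of $\ker A$ is supported in $\Zcal$, so $\{A_{x}:x\in\Zcal\}$ is affinely independent and, the moment map restricting to a diffeomorphism of $\Ecal$ onto the relative interior of $\Mcal$, $|\Zcal|\le\dim\Mcal+1=\dim\Ecal+1$.

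Part~2 is where the real work lies, and I would obtain it by pushing a little mass out of $\Ycal$. Fix $x_{0}\in\Xcal\setminus\Ycal$, put $P_{t}:=(1-t)P_{+}+t\,\delta_{x_{0}}$, and set $b^{*}:=\max_{x\notin\Ycal}\langle A_{x},w\rangle<a$ and $c:=\bigl(a-\langle A_{x_{0}},w\rangle\bigr)/(a-b^{*})\ge1$. A routine expansion gives $\Hr(P_{t})=\Hr(P_{+})-t\log t+O(t)$, and the crux is the estimate
\[
  H_{r}^{*}(AP_{t})=\Hr(\PEcal)-c\,t\log t+O(t).
\]
Granting it, $D(P_{t}||\Ecal)-D(P_{+}||\Ecal)=(1-c)\,t\log t+O(t)$; since $t\log t<0$ and $t=o(|t\log t|)$ as $t\downarrow0$, the case $c>1$ makes this strictly positive for small $t>0$, contradicting local maximality of $P_{+}$. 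Hence $c=1$, i.e.\ $\langle A_{x_{0}},w\rangle=b^{*}$; as $x_{0}$ was arbitrary the moment map sends all of $\Xcal\setminus\Ycal$ into $\{\langle\cdot,w\rangle=b^{*}\}$, a hyperplane parallel to and distinct from $\{\langle\cdot,w\rangle=a\}\supseteq\{A_{x}:x\in\Ycal\}$, which is part~2.

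The one genuinely delicate point is the displayed estimate for $H_{r}^{*}(AP_{t})$, i.e.\ the boundary behaviour of the maximum-entropy value near $F$. Let $\nu(t)$ be the mass placed outside $\Ycal$ by the measure $(P_{t})_{\Ecal}$ realizing $H_{r}^{*}(AP_{t})$; computing $\langle AP_{t},w\rangle$ in two ways gives $t\bigl(a-\langle A_{x_{0}},w\rangle\bigr)=\sum_{x\notin\Ycal}(P_{t})_{\Ecal}(x)\,(a-\langle A_{x},w\rangle)$, hence $\nu(t)\le c\,t$. Splitting $\Hr((P_{t})_{\Ecal})$ into its $\Ycal$- and complement-parts, bounding the first by $\Hr(\PEcal)+O(t)$ (the conditional mean on $\Ycal$ stays in the relative interior of $F$) and the second by $-\nu(t)\log\nu(t)+O(\nu(t))$, and using monotonicity of $s\mapsto-s\log s$ near $0$ together with $\nu(t)\le c\,t$, yields the upper bound $H_{r}^{*}(AP_{t})\le\Hr(\PEcal)-c\,t\log t+O(t)$. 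For the matching lower bound I would test $H_{r}^{*}$ against the explicit feasible measure $\tilde Q_{t}:=(1-ct)\,P_{t}^{(0)}+ct\,\sigma$, where $\sigma$ is any fixed probability measure supported on the ``second face'' $\{x\notin\Ycal:\langle A_{x},w\rangle=b^{*}\}$ and $P_{t}^{(0)}\in\Ecal^{\Ycal}$ has mean $(AP_{t}-ct\,A\sigma)/(1-ct)$; the only thing to verify is that this mean lies in $F$, which holds because the choice of $c$ makes the displacement $\tfrac1cA_{x_{0}}+(1-\tfrac1c)A\PEcal-A\sigma$ that governs it tangent to the affine hull of $F$. Then $\Hr(\tilde Q_{t})=\Hr(\PEcal)-c\,t\log t+O(t)$ and $H_{r}^{*}(AP_{t})\ge\Hr(\tilde Q_{t})$ complete the estimate. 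I also expect to use the standard facts about $\ol\Ecal$ cited for Theorem~\ref{thmA:rI-projection} — supports of limit points of $\Ecal$ are exposed sets, and the $rI$-projection depends continuously on $P$.
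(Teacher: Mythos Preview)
The paper does not actually prove Theorem~\ref{thmA:Matus}: its ``proof'' consists entirely of citations to Ay~\cite{Ay02:Pragmatic_structuring}, Mat\'u\v{s}--Ay~\cite{MatusAy03:Maximization_of_the_Info_Div_from_Exp_Fam}, and Mat\'u\v{s}~\cite{Matus07:Optimality_conditions}. Your submission is therefore an original argument rather than a re-derivation of anything in this paper, and the comparison is really with the cited literature. Your treatment of parts~1 and~3 is correct and pleasantly self-contained: working with the identity $D(P\Vert\Ecal)=H_{r}^{*}(AP)-H_{r}(P)$, restricting to perturbations supported on~$\Zcal$, and reading off the projection property from the Lagrange/gradient relation for $H_{r}^{*}$ on the face $F$ is exactly the right mechanism; the strict-convexity argument for part~3 is clean.

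Part~2, however, has a genuine gap in the lower bound $H_{r}^{*}(AP_{t})\ge H_{r}(\PEcal)-c\,t\log t+O(t)$. Your test measure $\tilde Q_{t}=(1-ct)P_{t}^{(0)}+ct\,\sigma$ requires $AP_{t}^{(0)}=(AP_{t}-ct\,A\sigma)/(1-ct)$ to lie in the face $F$, and you justify this by asserting that the first-order displacement $\tfrac{1}{c}A_{x_{0}}+(1-\tfrac{1}{c})A\PEcal-A\sigma$ is tangent to $\mathrm{aff}(F)$. But the computation you indicate only shows that its inner product with the \emph{chosen} exposing functional $w$ vanishes. When $F$ has codimension greater than one in $\Mcal$, there are further independent normals $w'$, and vanishing against these imposes additional linear constraints on $A\sigma$. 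Since $\sigma$ is required to be a probability measure on the $w$-second face $\{x\notin\Ycal:\langle A_{x},w\rangle=b^{*}\}$, these constraints need not be satisfiable. Concretely, take $A_{1}=(0,0)$, $A_{2}=(1,0)$, $A_{3}=(0,1)$, $A_{4}=(3,-1)$; with $w=(-1,-1)$ the second face is $\{2,3\}$ and for $x_{0}=4$ one gets $c=2$, but the required $A\sigma=\tfrac{1}{2}A_{4}=(3/2,-1/2)$ forces $\sigma_{3}<0$. Other exposing directions fail similarly. So the asserted matching lower bound is not available in general from your construction, and without it the $(1-c)\,t\log t$ expansion---hence the contradiction for $c>1$---does not follow. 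Mat\'u\v{s}'s proof in~\cite{Matus07:Optimality_conditions} handles this boundary case by a different (KKT-type) argument; your approach could likely be repaired by a more careful analysis of the asymptotics of $H_{r}^{*}$ near a higher-codimension face, but as written the step ``tangent to the affine hull of $F$'' is unjustified.
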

\begin{proof}
  Statements {1.} and {3.} were already known to Ay\cite{Ay02:Pragmatic_structuring} in the special case where
  $\Ycal=\Xcal$.  The general form of statement {3.} is Proposition 3.2 of
  \cite{MatusAy03:Maximization_of_the_Info_Div_from_Exp_Fam}.  Statement {2.} and the generalization of statement {1.}
  are due to Matúš\cite[Theorem 5.1]{Matus07:Optimality_conditions}.
\end{proof}
The paper \cite{Matus07:Optimality_conditions} contains further conditions on the maximizer.  However, these
will not be studied in this work.
\begin{defi}
  \label{def:critpt}
  Any probability measure $P$ that satisfies \eqref{eq:defcritpt} will be called a \emph{projection point}.  If $P$
  satisfies conditions {1.} and {2.} of Theorem \ref{thmA:Matus}, then $P$ will be called a \emph{quasi-critical point}
  of $D(\cdot||\Ecal)$, or a \emph{$D$-quasi-critical point}\footnote{In convex analysis, a point satisfying all
    first-order conditions (which in general comprise both equations and inequalities) of a convex function is called a
    \emph{critical point}.  In analogy to this, the term ``quasi-critical'' point is chosen in this work for a point
    which satisfies only the \emph{equations} derived from the first order conditions of an arbitrary function.}.
\end{defi}

\section{Projection points}
\label{sec:projpoints}

In this section assume that $A$ does not have full rank.  Otherwise the function $\DE$ is trivial.

Let $P_{+}$ be a projection point, and let $\PEcal$ be its projection to $\Ecal$.  Denote $\Zcal=\supp(P_{+})$ and
$\Ycal=\supp(P_{\Ecal})$.  Every measure $P_{\lambda} := \lambda P_{+} + (1 - \lambda) \PEcal$ on the line through
$P_{+}$ and $\PEcal$ is normalized and has the same sufficient statistics as $P_{+}$ and $\PEcal$. Fix $\lambda_{-} = -
\frac{\PEcal(\Zcal)}{1 - \PEcal(\Zcal)}$. Then
\begin{equation}
  P_{\lambda_{-}}(x) =
  \begin{cases}
    - \frac{\PEcal(\Zcal)}{1 - \PEcal(\Zcal)} \frac{\PEcal(x)}{\PEcal(\Zcal)} + \frac{1}{1 - \PEcal(\Zcal)} \PEcal(x) = 0 & \text{ if }x \in \Zcal, \\
    (1 - \lambda_{-}) \PEcal(x) = \frac{1 + \PEcal(\Zcal)}{1 - \PEcal(\Zcal)} \PEcal(x) \ge 0                   & \text{ else.}
  \end{cases}
\end{equation}
Thus $P_{-}:= P_{\lambda_{-}}$ is a probability measure with support equal to $\Ycal \setminus \Zcal$, and $u :=
P_{+} - P_{-}$ lies in the kernel of $A$.
%
Furthermore, $P_{-}$ is a second projection point with the same projection $\PEcal$ to $\Ecal$ as
$P_{+}$.

The projection $\PEcal$ can be written as a convex combination of $P_{+}$ and $P_{-}$, i.e., $\PEcal = \mu P_{+} + (1-\mu) P_{-}$,
where $\mu = \frac{\lambda_{-}}{\lambda_{-}-1} \in (0, 1)$. Since the supports of $P_{+}$ and $P_{-}$ are disjoint we
have $\mu = \PEcal(\Zcal)$ and $(1-\mu) = \PEcal(\Xcal\setminus \Zcal)$. In other words,
\begin{equation}
  \PEcal(x) =
  \begin{cases}
    \mu P_{+}(x), & x\in \Zcal, \\
    (1 - \mu) P_{-}(x), & x\notin \Zcal.
  \end{cases}
\end{equation}

There are a lot of relations between $P_{+}$, $P_{-}$ and $\PEcal$.  They will be collected in the following Lemma in a
slightly more general form.
\begin{lemma}
  \label{lem:formulae}
  Let $P_{+}$ and $P_{-}$ be two probability measures with disjoint supports such that $AP_{+}=AP_{-}$.  Let $\Phat$ be
  the unique probability measure in the convex hull of $P_{+}$ and $P_{-}$ that maximizes the function
  \begin{equation}
    \Hr(Q) = - \sum_{x} Q(x)\log \frac{Q(x)}{r(x)}.
  \end{equation}
  Define $\mu=\Phat(\Zcal)$, where $\Zcal=\supp(P_{+})$.
  Then the following equations hold:
  \begin{subequations}
    \begin{equation}
      \label{eq:HPhatHp}
      \exp(\Hr(\Phat)) = \exp(\Hr(P_{+})) + \exp(\Hr(P_{-})),
    \end{equation}
    \begin{equation}
      \label{eq:mufrac}
      \frac{\mu}{1-\mu} = \exp\left(\Hr(P_{+}) - \Hr(P_{-})\right),
    \end{equation}
    \begin{equation}
      \label{eq:KL}
      D(P_{+}||\Phat) = \Hr(\PEcal) - \Hr(P_{+}) = \log(1 + \exp(\Hr(P_{-}) - \Hr(P_{+}))).
    \end{equation}
  \end{subequations}
\end{lemma}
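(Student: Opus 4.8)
The plan is to reduce everything to a one-variable optimization along the segment joining $P_{+}$ and $P_{-}$. Write $Q_{\lambda} = \lambda P_{+} + (1-\lambda)P_{-}$ for $\lambda\in[0,1]$. Since the supports are disjoint, $Q_{\lambda}(x) = \lambda P_{+}(x)$ for $x\in\Zcal$ and $Q_{\lambda}(x) = (1-\lambda)P_{-}(x)$ for $x\notin\Zcal$, and because $P_{+}(\Zcal) = 1$ and $P_{-}(\Xcal\setminus\Zcal) = 1$ a direct computation (using the convention $0\log 0 = 0$) gives
\[
  \Hr(Q_{\lambda}) = h(\lambda) + \lambda\,\Hr(P_{+}) + (1-\lambda)\,\Hr(P_{-}),
\]
where $h(\lambda) = -\lambda\log\lambda - (1-\lambda)\log(1-\lambda)$ is the binary entropy.

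Next I would maximize $f(\lambda) := \Hr(Q_{\lambda})$ over $[0,1]$. As $h$ is strictly concave, so is $f$, which gives both the existence and the uniqueness of the maximizer $\Phat$ asserted in the statement. Writing $a := \Hr(P_{+})$ and $b := \Hr(P_{-})$, one has $f'(\lambda) = \log\frac{1-\lambda}{\lambda} + a - b$, which tends to $+\infty$ as $\lambda\to 0^{+}$ and to $-\infty$ as $\lambda\to 1^{-}$; hence the maximizer lies in the open interval and solves $f'(\lambda)=0$, i.e. $\lambda^{*} = \frac{e^{a}}{e^{a}+e^{b}}$. Since $\Phat(\Zcal) = \sum_{x\in\Zcal}Q_{\lambda^{*}}(x) = \lambda^{*}P_{+}(\Zcal) = \lambda^{*}$, we obtain $\mu = \frac{e^{a}}{e^{a}+e^{b}}$. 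Equation \eqref{eq:mufrac} then follows from $\frac{\mu}{1-\mu} = e^{a-b}$, and substituting $\mu$ back into the displayed formula for $f$ collapses it to $\Hr(\Phat) = \log(e^{a}+e^{b})$, which is \eqref{eq:HPhatHp} after exponentiating.

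For \eqref{eq:KL} I would exploit disjointness of the supports once more: on $\Zcal$ the likelihood ratio $P_{+}(x)/\Phat(x)$ equals the constant $1/\mu$, so $D(P_{+}||\Phat) = \sum_{x\in\Zcal}P_{+}(x)\log\frac{1}{\mu} = -\log\mu$. Plugging in $\mu$ yields $-\log\mu = \log(e^{a}+e^{b}) - a = \Hr(\Phat) - \Hr(P_{+})$ and, equivalently, $-\log\mu = \log\bigl(1 + e^{\,b-a}\bigr)$, which are exactly the two claimed equalities. I do not anticipate a genuine obstacle: the only points requiring a little care are the concavity/boundary argument that pins down $\lambda^{*}$ as the (unique) maximizer, and the bookkeeping of the $0\log 0 = 0$ convention in case one endpoint is degenerate.
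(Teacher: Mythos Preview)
Your proof is correct and follows essentially the same route as the paper: both reduce to the one-variable problem along the segment, derive the decomposition $\Hr(Q_{\lambda}) = h(\lambda) + \lambda\,\Hr(P_{+}) + (1-\lambda)\,\Hr(P_{-})$, solve the first-order condition for the optimal $\lambda$, and then read off \eqref{eq:mufrac}, \eqref{eq:HPhatHp}, and $D(P_{+}\|\Phat)=-\log\mu$ in the same way. Your version is slightly more careful in explicitly justifying that the maximizer is interior (via the boundary behavior of $f'$) and unique (via strict concavity), points the paper leaves implicit.
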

\begin{proof}
  The first observation is
\begin{equation}
  \label{eq:HrPhat_mu}
  \Hr(\PEcal) = \mu \Hr(P_{+}) + (1-\mu) \Hr(P_{-}) + h(\mu,1-\mu),
\end{equation}
where $h(\mu,1-\mu) = -\mu \log (\mu) - (1-\mu) \log (1-\mu)$.

Since $\PEcal$ maximizies $\Hr$ among all probability measures with the same sufficient statistics as $P_{+}$ and
$P_{-}$, it follows that
\begin{multline*}
  \left.\pdd{\left(\mu' \Hr(P_{+}) + (1-\mu') \Hr(P_{-}) + h(\mu',1-\mu')\right)}{\mu'}\right|_{\mu'=\mu} 
  =  \\  =
  \Hr(P_{+}) - \Hr(P_{-}) + \log(1-\mu) - \log(\mu)
\end{multline*}
must vanish, which rewrites to
\begin{equation}
  \frac{\mu}{1-\mu} = \exp\left(\Hr(P_{+}) - \Hr(P_{-})\right),
\end{equation}
or
\begin{equation}
  \label{eq:mu_Hs}
  \mu = \frac{\exp(\Hr(P_{+}))}{\exp(\Hr(P_{+})) + \exp(\Hr(P_{-}))} = \frac{1}{1 + \exp(\Hr(P_{-}) - \Hr(P_{+}))}.
\end{equation}
This implies
\begin{align*}
  h(\mu,1-\mu) & = - \mu \Hr(P_{+}) + \mu \log\left(\exp(\Hr(P_{+})) + \exp(\Hr(P_{-}))\right) \\ &\qquad
                   - (1-\mu) \Hr(P_{-}) + (1-\mu) \log\left(\exp(\Hr(P_{+})) + \exp(\Hr(P_{-}))\right) \\
               & = - \mu \Hr(P_{+}) - (1-\mu) \Hr(P_{-}) + \log\left(\exp(\Hr(P_{+})) + \exp(\Hr(P_{-}))\right).
\end{align*}
Comparison with equation \eqref{eq:HrPhat_mu} yields
\begin{equation}
  \exp(\Hr(\PEcal)) = \exp(\Hr(P_{+})) + \exp(\Hr(P_{-})),
\end{equation}
which in turn simplifies \eqref{eq:mu_Hs} to
\begin{equation}
  \mu = \exp(\Hr(P_{+}) - \Hr(\PEcal)).
\end{equation}
The Kullback-Leibler divergence equals
\begin{subequations}
  \begin{align}
    D(P_{+}||\PEcal) = \sum_{x\in \Zcal} P_{+}(x) \log \frac{1}{\PEcal(\Zcal)} &= - \log(\mu) \\
                                                                             &= \Hr(\PEcal) - \Hr(P_{+}) \\
                                                                             &= \log(1 + \exp(\Hr(P_{-}) - \Hr(P_{+}))).
  \end{align}
\end{subequations}
\end{proof}

As an easy consequence
\begin{equation}
  \exp(-D(P_{+}||\Ecal)) + \exp(-D(P_{-}||\Ecal)) = 1,
\end{equation}
from which we see that in general $P_{+}$ and $P_{-}$ will not be both maximizers of $D(\cdot||\Ecal)$.  Furthermore it
follows that $D(P||\Ecal) \ge \log(2)$ for any global maximizer $P$ (assuming that $A$ does not have full rank).

\section{Decomposition of Kernel Elements}
\label{sec:KernelDecomp}

Now suppose that $u$ is an arbitrary nonzero element from the kernel of $A$. Then $u = u_{+} - u_{-}$, where $u_{+}$ and
$u_{-}$ are positive vectors of disjoint support.  Since $A$ contains the constant vector $(1,\dots,1)$ in its rowspan,
it follows that the $\ell_{1}$-norms of $u_{+}$ and $u_{-}$ are equal.  Thus $u = d_{u}(P_{+} - P_{-})$, where $d_{u} =
||u_{+}||_{1} = ||u_{-}||_{1} = \frac12 ||u||_{1} > 0$ is called the \emph{degree} of $u$ and $P_{+}$ and $P_{-}$ are
two probability measures with disjoint supports.  Since $P_{+}$ and $P_{-}$ have the same image under~$A$, they have the
same projection to~$\Ecal$, which will be denoted by~$P_{\Ecal}$.

Let $\Phat$ be the convex combination of $P_{+}$ and $P_{-}$ that maximizes $H_{r}$.  Note that in general $\Phat\neq
P_{\Ecal}$.  Still Lemma \ref{lem:formulae} applies.
Furthermore
\begin{equation}
  \label{eq:DEPhat}
  D(P_{+}||\Ecal) = \Hr(P_{\Ecal}) - \Hr(P_{+}) 
  \ge D(P_{+}||\Phat) = \Hr(\Phat) - \Hr(P_{+}),
\end{equation}
since $P_{\Ecal}$ maximizes $\Hr$ when the image under $A$ is constrained (see Theorem~\ref{thmA:rI-projection}).


These facts can be used to relate two different optimization problems. The first one is the maximization of the \KL{}
from $\Ecal$. The second one is the maximization of the function
\begin{equation}
  \label{eq:H1}
  \Dbarr: \ker A 
  \to \Rb, u \mapsto \sum_{x} u(x)\log\frac{|u(x)|}{r(x)}
\end{equation}
subject to the constraint $d_{u} = \frac{1}{2}||u||_{1} = 1$.  From what has been said above, if $d_{u}=1$ then
$u=Q_{+}-Q_{-}$ for two probability measures $Q_{+},Q_{-}$ with disjoint support, and in this case
\begin{equation}
  \Dbarr(u) = \Hr(Q_{-}) - \Hr(Q_{-}).
\end{equation}
Since $\Dbarr$ is a continuous function from the compact $\ell_{1}$-sphere of radius $2$ in $\ker A$, a maximum is
guaranteed to exist.

\begin{thm}
  \label{thm:Dualmaxi}
  Let $\Ecal$ be an exponential family with sufficient statistics $A$.
  \begin{enumerate}
  \item 
    If $u = Q_{+}-Q_{-} \in \ker A \setminus\{0\}$ is a global maximizer of $\Dbarr$ subject to $d_{u} =
    \frac{1}{2}||u||_{1} = 1$, then the positive part $Q_{+}$ of $u$ globally maximizes $D(\cdot||\Ecal)$.
  \item Let $P_{+}$ be a local maximizer of the information divergence.  There exists a unique probability measure
    $P_{-}$ with support disjoint from $P_{+}$ such that $P_{+} - P_{-} \in \ker A$ is a local maximizer of $\Dbarr$.
    If $P_{+}$ is a global maximizer, then $P_{+} - P_{-}$ is a global maximizer.
  \end{enumerate}
\end{thm}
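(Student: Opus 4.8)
The plan is to reduce both statements to a single inequality relating $D(Q_{+}||\Ecal)$ and $\Dbarr(Q_{+}-Q_{-})$, and then to read off the equality cases from Theorem~\ref{thmA:Matus} and the pairing of Section~\ref{sec:projpoints}. The inequality is: for $u=Q_{+}-Q_{-}\in\ker A$ with $d_{u}=1$ (so $Q_{\pm}$ are probability measures of disjoint support with $AQ_{+}=AQ_{-}$),
\begin{equation*}
  D(Q_{+}||\Ecal)\;\ge\;\log\bigl(1+\exp\Dbarr(u)\bigr),
\end{equation*}
with equality exactly when the common $rI$-projection $\PEcal$ of $Q_{+}$ and $Q_{-}$ lies on the segment $[Q_{+},Q_{-}]$. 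To prove it, let $\Phat$ be the $\Hr$-maximizer on $[Q_{+},Q_{-}]$; Lemma~\ref{lem:formulae} gives $\exp(\Hr(\Phat))=\exp(\Hr(Q_{+}))+\exp(\Hr(Q_{-}))$, and since every point of $[Q_{+},Q_{-}]$ is a probability measure with sufficient statistics $AQ_{+}$, Theorem~\ref{thmA:rI-projection} gives $\Hr(\PEcal)\ge\Hr(\Phat)$, with equality iff $\PEcal\in[Q_{+},Q_{-}]$ by uniqueness of the $\Hr$-maximizer; dividing by $\exp(\Hr(Q_{+}))$ and using $D(Q_{+}||\Ecal)=\Hr(\PEcal)-\Hr(Q_{+})$ and $\Dbarr(u)=\Hr(Q_{-})-\Hr(Q_{+})$ finishes it. The key consequence is that when $P_{+}$ is a projection point with partner $P_{-}$ from Section~\ref{sec:projpoints}, then $\PEcal=\mu P_{+}+(1-\mu)P_{-}$ with $\mu\in(0,1)$ lies on $[P_{+},P_{-}]$, so $D(P_{+}||\Ecal)=\log(1+\exp\Dbarr(P_{+}-P_{-}))$.

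Part~1 and the global half of part~2 then follow by chaining this against the equality case, using that global and local maximizers of $D(\cdot||\Ecal)$ are projection points (Theorem~\ref{thmA:Matus}). If $u=Q_{+}-Q_{-}$ globally maximizes $\Dbarr$ and $P^{*}$ is a global maximizer of $D(\cdot||\Ecal)$ with partner $P^{*}_{-}$, set $v:=P^{*}-P^{*}_{-}$; then $D(Q_{+}||\Ecal)\ge\log(1+\exp\Dbarr(u))\ge\log(1+\exp\Dbarr(v))=D(P^{*}||\Ecal)$, so $Q_{+}$ is a global maximizer. Conversely, if $P_{+}$ is a global maximizer with partner $P_{-}$, then for any feasible $u'=Q'_{+}-Q'_{-}$ one has $\log(1+\exp\Dbarr(P_{+}-P_{-}))=D(P_{+}||\Ecal)\ge D(Q'_{+}||\Ecal)\ge\log(1+\exp\Dbarr(u'))$, whence $\Dbarr(P_{+}-P_{-})\ge\Dbarr(u')$.

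For the local statement in part~2, let $P_{+}$ be a local maximizer of $D(\cdot||\Ecal)$, a projection point, let $P_{-}$ be its Section~\ref{sec:projpoints} partner, and $u:=P_{+}-P_{-}$. For \emph{existence} I note that a feasible $u'$ close to $u$ has positive part $Q'_{+}$ close to $P_{+}$ (the positive-part map is continuous, and equals $Q'_{+}$ since $d_{u'}=1$), so $D(Q'_{+}||\Ecal)\le D(P_{+}||\Ecal)$ for $u'$ close enough; then $\log(1+\exp\Dbarr(u'))\le D(Q'_{+}||\Ecal)\le D(P_{+}||\Ecal)=\log(1+\exp\Dbarr(u))$, so $\Dbarr(u')\le\Dbarr(u)$. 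For \emph{uniqueness}, let $P'_{-}$ be a probability measure with support disjoint from $P_{+}$, $P_{+}-P'_{-}\in\ker A$, and $u':=P_{+}-P'_{-}$ a local maximizer of $\Dbarr$. The positive part of $u'$ is $P_{+}$, so the inequality gives $\Dbarr(u')\le\Dbarr(u)$, and if $P'_{-}\neq P_{-}$ it is strict: $\PEcal\in[P_{+},P'_{-}]$ together with $\PEcal=\mu P_{+}+(1-\mu)P_{-}$ would force (comparing on $\Zcal$ and on $\Xcal\setminus\Zcal$) $P'_{-}=P_{-}$. Strictness means $\Hr(P'_{-})<\Hr(P_{-})$. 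Now interpolate inside the constraint set: $u_{t}:=(1-t)u'+tu=P_{+}-R_{t}$ with $R_{t}:=(1-t)P'_{-}+tP_{-}$ a probability measure supported off $\Zcal$, so $\Dbarr(u_{t})=\Hr(R_{t})-\Hr(P_{+})$, and a short computation gives
\begin{equation*}
  \left.\frac{d}{dt}\Hr(R_{t})\right|_{t=0^{+}}=\Hr(P_{-})-\Hr(P'_{-})+D(P_{-}||P'_{-})\;>\;0
\end{equation*}
(understood as $+\infty$ if $\supp P_{-}\not\subseteq\supp P'_{-}$). Hence $\Dbarr(u_{t})>\Dbarr(u')$ for small $t>0$, contradicting local maximality of $u'$; so $P'_{-}=P_{-}$.

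I expect the uniqueness argument to be the main obstacle: one must notice that the equality analysis of the fundamental inequality forces $\Hr(P'_{-})<\Hr(P_{-})$ for any wrong partner, and that this is precisely the sign making $t\mapsto\Hr(R_{t})$ increase at $t=0$, so that sliding from a wrong partner toward the correct one strictly increases $\Dbarr$. Everything else is bookkeeping around Lemma~\ref{lem:formulae}, Theorem~\ref{thmA:rI-projection} and Theorem~\ref{thmA:Matus}.
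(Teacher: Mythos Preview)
Your proof is correct and follows essentially the same route as the paper: both hinge on the chain
\[
  D(P_{+}||\Ecal)=\log\bigl(1+\exp\Dbarr(P_{+}-P_{-})\bigr)\;\ge\;D(Q_{+}||\Ecal)\;\ge\;\log\bigl(1+\exp\Dbarr(Q_{+}-Q_{-})\bigr),
\]
obtained from Lemma~\ref{lem:formulae} and \eqref{eq:DEPhat}, together with the continuity of $u\mapsto u_{+}$ for the local statement. Your explicit formulation of the fundamental inequality with its equality case is a clean way to organize this.

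The only place you diverge from the paper is the uniqueness argument, and there you work harder than necessary. The paper observes in one line that, with $P_{+}$ fixed, the set of admissible partners $\{Q_{-}:\ \supp Q_{-}\cap\supp P_{+}=\emptyset,\ P_{+}-Q_{-}\in\ker A\}$ is convex and $\Dbarr(P_{+}-Q_{-})=\Hr(Q_{-})-\Hr(P_{+})$ is strictly concave on it; hence there is at most one local maximizer on this slice, and any local maximizer of $\Dbarr$ on the full constraint set restricts to one. Your derivative computation $\left.\tfrac{d}{dt}\Hr(R_{t})\right|_{t=0^{+}}=\Hr(P_{-})-\Hr(P'_{-})+D(P_{-}\|P'_{-})>0$ is correct (and your detour through the equality case to secure $\Hr(P'_{-})<\Hr(P_{-})$ is valid), but it amounts to re-proving strict concavity of $\Hr$ by hand along one segment. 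The paper's version is shorter and avoids the boundary technicality you flag when $\supp P_{-}\not\subseteq\supp P'_{-}$.
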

\begin{proof}
  \Schritt[1]
  Consider global maximizers first:

  Choose probability measures $Q_{+}$ and $Q_{-}$ of disjoint support such that $u = Q_{+} - Q_{-}$ maximizes
  $\Dbarr$. 
  Denote by $\Qhat$ the probability measure from the convex hull of $Q_{+}$ and $Q_{-}$ that maximizes $\Hr$. %
  In addition, let $P_{+}$ be a global maximizer of $D(\cdot||\Ecal)$. Construct $P_{-}$ as in section
  \ref{sec:projpoints}.
%
  From \eqref{eq:KL} and \eqref{eq:DEPhat} it follows that
  \begin{align}
    \log(1 + \exp(\Hr(P_{-}) - \Hr(P_{+}))) &= D(P_{+}||\Ecal) \ge D(Q_{+}||\Ecal) \nonumber\\
                    &\ge D(Q_{+}||\Qhat) = \Hr(\Qhat) - \Hr(Q_{+}) \nonumber\\
                    &= \log(1 + \exp(\Hr(Q_{-}) - \Hr(Q_{+}))).
    \label{eq:proofofthm}
  \end{align}
  The maximality property of $Q_{+}-Q_{-}$ implies that all terms of \eqref{eq:proofofthm} are equal. This proves the
  global part of the theorem.

  %
  \Schritt
  Now suppose that $P_{+}$ is a local maximizer of the information divergence and define $P_{-}$ as above. 
  Choose a neighbourhood $U$ of $P_{-}$ such that $D(P'||\Ecal) \le D(P_{+}||\Ecal)$ for all $P'\in U$. Since the map
  $u\mapsto (u_{+},u_{-})$ 
  is continuous, there is a neighbourhood $U'$ of $P_{+}-P_{-}$ such that $Q'_{+} - Q'_{-} \in U' \Longrightarrow
  Q'_{+}\in U$ for all probability measure $Q'_{+},Q'_{-}$.  It follows that
  \begin{multline}
    \log(1 + \exp(\Hr(P_{-}) - \Hr(P_{+}))) = D(P_{+}||\Ecal) \ge D(Q'_{+}||\Ecal) \nonumber\\
                    \ge \Hr(\Qhat') - \Hr(Q_{+}') = \log(1 + \exp(\Hr(Q_{-}') - \Hr(Q_{+}'))).
  \end{multline}
  for all $Q'_{+} - Q'_{-}$ from the neighbourhood $U'$ of $P_{+} - P_{-}$. Thus $P_{+} - P_{-}$ is a local maximizer.

  $P_{-}$ is unique since it is characterized as the unique maximizer of the concave function $\Hr$ under the linear
  constraints $P_{+} - P_{-}\in\ker A$ and $\supp(P_{+})\cap \supp(P_{-}) = \emptyset$.
  %
\end{proof}

\begin{rem}
  \label{sec:reformulatingDbarr}
  There are several possibilities to reformulate the problem of maximizing $\Dbarr$.
  To see this, note that $\Dbarr$
  is homogeneous of degree one, since
  \begin{equation}
    \Dbarr(\alpha u) = \alpha \sum_{x}u(x)\log \frac{|u(x)|}{r(x)} + \alpha \left(\sum_{x} u(x)\right) \log|\alpha| = \alpha \Dbarr(u)
  \end{equation}
  for all $u\in\ker A$ and $\alpha\in\Rb$.  This means that, when maximizing $\Dbarr$, the constraint $d_{u}=1$ is
  equivalent to $d_{u}\le 1$.  Under the inequality constraint the maximization is over a polytope, while under the
  equality constraint the maximization is over the boundary of the same polytope.

  A third alternative is the maximization of the function
  \begin{equation}
    \label{eq:Dbarr1}
    \Dbarr^{1}:\; \ker A\setminus\{0\} \to \Rb,\quad u \mapsto \frac{1}{d_{u}}\Dbarr(u).
  \end{equation}
  The solutions of this last problem need to be normalized in order to compare this maximization problem with the
  formulations.
\end{rem}
\begin{rem}
  It is an open question when the projection $\PEcal$ of a maximizer $P_{+}$ lies in the interior of the probability
  simplex.  More generally one could ask for the support of $\PEcal$.  Since $\supp(\PEcal) = \supp(P_{+}-P_{-})$ this
  question can also be studied with the help of the theorem.

  In many cases the support of $\PEcal$ will be all of $\Xcal$.  However, the construction of Example \ref{ex:1} shows
  that $\PEcal$ can have any support (of cardinality at least two).  See also \cite{Matus07:Optimality_conditions}.
\end{rem}

\section{First order conditions}
\label{sec:firstorder}

Theorem \ref{thm:Dualmaxi} implies that all maximizers of $\DE$ are known once all maximizers of $\Dbarr$ are found.
The latter can be computed by solving the first order conditions.
To simplify the notation define
\begin{equation}
  u(B) := \sum_{x\in B} u(x)
\end{equation}
if $u\in\Rb^{\Xcal}$ is any vector and $B\subseteq\Xcal$.

\begin{prop}
  \label{prop:critpts}
  Let $u\in\ker A$ be a local maximizer of $\Dbarr$ subject to $d_{u}=\frac{1}{2}||u||_{1}$. 
  The following statements hold:
  \begin{enumerate}
  \item 
    \label{it:Var0}
    \label{eq:Var0}
    $v(u=0) := \sum_{x:u(x)=0}v(x) = 0$
    for all $v\in\ker A$.
  \item 
    \label{it:Var1ineq}
    $u$ satisfies
    \begin{equation}
      \label{eq:Var1ineq}
      \sum_{x:u\neq 0} v(x) \log \frac{|u(x)|}{r(x)} + \sum_{x:u=0} v(x) \log \frac{|v(x)|}{r(x)} \ge d'_{u}(v) \Dbarr(u)
    \end{equation}
    for all $v\in\ker A$, where $d_{u}'(v) := v(u>0) + v_{+}(u=0)$.
  \item If $v\in\ker A$ satisfies $\supp (v) \subseteq \supp (u)$, then
    \label{it:Var1}
    \begin{equation}
      \label{eq:Var1}
      \sum_{x:u\neq 0} v(x) \log \frac{|u(x)|}{r(x)} = d'_{u}(v) \Dbarr(u).
    \end{equation}
  \end{enumerate}
\end{prop}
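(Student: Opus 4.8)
The plan is to derive all three statements from the first-order (Karush–Kuhn–Tucker type) conditions for a local maximizer of a directionally differentiable function on a polytope, exploiting the homogeneity of $\Dbarr$ noted in Remark~\ref{sec:reformulatingDbarr}. The key preliminary is to understand the feasible directions. Since (by homogeneity) maximizing $\Dbarr$ subject to $d_u=1$ is equivalent to maximizing it subject to $d_u\le 1$, I would work with the polytope $\{w\in\ker A : d_w\le 1\}$ and note that $u$ lies on its boundary. A feasible direction at $u$ is any $v\in\ker A$ such that $u+tv$ stays in $\ker A$ (automatic) and, for small $t>0$, satisfies $d_{u+tv}\le 1$; the relevant first-order quantity is the one-sided derivative $\left.\frac{d}{dt}\right|_{t=0^+} d_{u+tv}$. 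Because $d_w=\tfrac12\|w\|_1=\sum_x |w(x)|$ restricted to $\ker A$ (where $\sum_x w(x)=0$), one computes that this one-sided derivative equals $v(u>0)-v(u<0)+\sum_{x:u(x)=0}|v(x)|$; using $\sum_x v(x)=0$ for $v\in\ker A$ this simplifies to $2\big(v(u>0)+v_+(u=0)\big)=2d'_u(v)$ with $d'_u(v)$ as defined in the proposition. Likewise the one-sided directional derivative of $\Dbarr$ at $u$ in direction $v$ is $\sum_{x:u\neq 0} v(x)\log\frac{|u(x)|}{r(x)} + \sum_{x:u=0} v(x)\log\frac{|v(x)|}{r(x)}$ — the first sum from ordinary differentiation where $u(x)\neq 0$, the second because near a coordinate with $u(x)=0$ the term $t v(x)\log\frac{|t v(x)|}{r(x)}$ has one-sided $t$-derivative $v(x)\log\frac{|v(x)|}{r(x)}$ (the $\log t$ piece multiplied by $v(x)$ vanishes in the limit since $tv(x)\log t\to 0$, but $v(x)\log|v(x)|$ survives).

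With these two directional-derivative formulas in hand, the three statements follow quickly. For statement~\ref{it:Var1}: if $\supp(v)\subseteq\supp(u)$ then both $v$ and $-v$ are feasible directions for the \emph{equality}-constrained problem restricted to the face of the polytope on which $\Dbarr$ is homogeneous, or more directly: $u+tv$ for small $|t|$ (both signs) has the same support as $u$, and by homogeneity one may rescale to keep $d_{u+tv}=1$; local maximality then forces the directional derivative of $\Dbarr^1$ (equivalently of $\Dbarr$ along the sphere) to vanish in direction $\pm v$. Translating the vanishing of the derivative of $\Dbarr^1(w)=\Dbarr(w)/d_w$ at $u$ gives exactly $\sum_{x:u\neq0} v(x)\log\frac{|u(x)|}{r(x)} = d'_u(v)\,\Dbarr(u)$ (here there is no $u=0$ term since $\supp(v)\subseteq\supp(u)$, and $d'_u(v)$ is the derivative of $d_w$). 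For statement~\ref{it:Var0}: given any $v\in\ker A$, both $v$ and $-v$ need not be feasible, but one can still argue that $u$ together with a small multiple of $v$ can be renormalized; the crucial point is that if $v(u=0)\neq 0$ then one of $\pm v$ strictly decreases $d$ to first order while the directional derivative of $\Dbarr$ stays finite — but then moving in that direction and rescaling back up to $d=1$ would strictly increase $\Dbarr$ unless $\Dbarr(u)$ and the $\log|v(x)|$ terms conspire; the clean way is to note $v(u=0)=0$ is forced because otherwise $u\pm \epsilon v$ for suitable sign has strictly smaller degree, and after rescaling we would get a feasible point with the same $\Dbarr$ to first order but we are on the boundary — I would instead derive it from statement~\ref{it:Var1ineq} by testing with $\pm v$ where $v$ is supported off $u$. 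For statement~\ref{it:Var1ineq}: for an arbitrary $v\in\ker A$, $v$ is a feasible direction for the $d_w\le 1$ problem exactly when the first-order change keeps us in the polytope, and local maximality gives $D'\Dbarr(u;v)\le 0$ whenever $v$ points into the polytope; combining the homogeneity (which lets us add multiples of $u$ itself, a feasible direction in both senses) with the general inequality yields precisely~\eqref{eq:Var1ineq}.

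Concretely I would organize the proof as: (a) record the one-sided derivative formulas for $d_w$ and $\Dbarr$ at $u$ in an arbitrary direction $v\in\ker A$; (b) prove~\ref{it:Var1} first, since it is the equality case and only needs $\pm v$ both feasible (same-support perturbations); (c) prove~\ref{it:Var1ineq} by writing the KKT condition $D'\Dbarr(u;v)\le \text{(Lagrange multiplier)}\cdot D'd(u;v)$ for all $v$, identifying the multiplier as $\Dbarr(u)$ via homogeneity (testing $v=u$ gives $\Dbarr(u)$ on the left and $d'_u(u)=1$ on the right, forcing the multiplier to be $\Dbarr(u)$, at least along directions where equality can hold), and rearranging; (d) deduce~\ref{it:Var0} from~\ref{it:Var1ineq} applied to $v$ and $-v$ when $\supp(v)\cap\supp(u)=\emptyset$ — for such $v$, $d'_u(\pm v)=\tfrac12\|v\|_1\mp\text{(something)}$... actually $d'_u(v)=v_+(u=0)$ and $d'_u(-v)=(-v)_+(u=0)=v_-(u=0)$, and~\eqref{eq:Var1ineq} for both gives $\pm\sum_x v(x)\log\frac{|v(x)|}{r(x)} \ge (v_\pm(u=0))\Dbarr(u)$; adding and using $v(u=0)=\sum_x v(x)$ on this support plus the constraint $\sum_x v(x)=0$ across all of $\ker A$... the cleanest route to~\ref{it:Var0} is actually the geometric one: if $v(u=0)\neq 0$ for some $v\in\ker A$, replace $v$ by a multiple so that $u+v$ still has $d_{u+v}=d_u$ is impossible since the degree strictly changes — hence I should present~\ref{it:Var0} as: $u$ maximizes $\Dbarr$ on the polytope $\{d_w\le1\}$, whose face containing $u$ in its relative interior is cut out by the equations $\{w(x)=0 : u(x)=0\}$ \emph{if} $u$ lies on no other facet, and local maximality on a neighborhood forces $u$ to lie in the relative interior of exactly the face $\{w : w(x)=0 \text{ for } u(x)=0\}$ — equivalently every $v\in\ker A$ with $v(u=0)=0$... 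I will settle the precise shortest derivation of~\ref{it:Var0} in the write-up; the honest statement of the main obstacle follows.

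The main obstacle I anticipate is the careful handling of the \emph{non-smoothness} of $\Dbarr$ and of the constraint $d_u=1$ at the boundary of the polytope: $\Dbarr$ fails to be differentiable exactly at the coordinate hyperplanes $\{u(x)=0\}$, and the $\ell_1$-sphere is not smooth there either, so one cannot simply write $\nabla\Dbarr = \lambda\nabla d_u$. Getting statement~\ref{it:Var1ineq} right requires correctly computing the one-sided directional derivatives — in particular recognizing that near a coordinate with $u(x)=0$ the contribution $t v(x)\log\frac{|tv(x)|}{r(x)} = tv(x)\log|v(x)| - tv(x)\log r(x) + tv(x)\log t$, whose right derivative at $t=0$ is $v(x)\log\frac{|v(x)|}{r(x)}$ (the $tv(x)\log t$ term is $o(t)$), and separately that the degree contributes $|v(x)|$ rather than a signed quantity at those coordinates — and then identifying the Lagrange multiplier with $\Dbarr(u)$ itself via homogeneity (Euler's relation $D'\Dbarr(u;u)=\Dbarr(u)$ combined with $d'_u(u)=1$). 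Statements~\ref{it:Var0} and~\ref{it:Var1} are then comparatively routine special cases, but writing statement~\ref{it:Var0} so that it is manifestly a consequence of first-order optimality (rather than an ad hoc support argument) needs a little care: the point is that $u$ must lie in the relative interior of a face of $\{d_w\le1\}$ on which $\Dbarr$ is linear (by homogeneity and the support being locally constant), and the affine hull of that face inside $\ker A$ is precisely $\{w\in\ker A : w(x)=0\text{ whenever }u(x)=0\}$, which forces $v(u=0)=0$ for all $v\in\ker A$ — otherwise $\ker A$ would contain a direction exiting every such face, contradicting that $u$ (a \emph{local} maximizer, hence lying on the boundary with $d_u=1$) is optimal along that direction after renormalization.
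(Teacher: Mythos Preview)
Your computation of the one-sided directional derivative of $\Dbarr$ contains a genuine error that breaks the argument. You claim twice that ``the $tv(x)\log t$ term is $o(t)$'', but this is false: $t\log t/t=\log t\to-\infty$, so $t\log t$ is \emph{not} $o(t)$; on the contrary it dominates any linear term. Consequently, for a direction $v\in\ker A$ with $v(u=0)=\sum_{x:u(x)=0}v(x)\neq 0$, the function $t\mapsto\Dbarr(u+tv)$ is \emph{not} one-sided differentiable at $t=0$: after summing over the zero-coordinates the dominant contribution is $t\log|t|\cdot v(u=0)$, whose derivative diverges. Your formula for the directional derivative is therefore only valid once you already know $v(u=0)=0$, i.e.\ once statement~\ref{it:Var0} is established, so you cannot use it to set up a KKT system from which statement~\ref{it:Var0} would follow.

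This is not a cosmetic issue: the divergence of the $\epsilon\log|\epsilon|$ term is precisely the mechanism that forces statement~\ref{it:Var0}. The paper's proof expands $\Dbarr^{1}(u+\epsilon v)$ and observes that the coefficient of $\epsilon\log|\epsilon|$ is $v(u=0)$; since this coefficient changes sign under $v\mapsto -v$ and the $\epsilon\log|\epsilon|$ term dwarfs everything of order $\epsilon$, local maximality immediately gives $v(u=0)=0$. Only \emph{after} this is known does the expansion reduce to a genuine first-order expansion (with your formula as the $\epsilon$-coefficient), from which the inequality~\eqref{eq:Var1ineq} follows for every $v$, and then the equality~\eqref{eq:Var1} by the usual $v\mapsto -v$ trick when $\supp(v)\subseteq\supp(u)$. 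Your alternative attempts at statement~\ref{it:Var0} (via faces of the polytope, or via~\eqref{eq:Var1ineq} applied to $\pm v$ supported off $u$) do not close the gap either: the face argument as written would prove $v(x)=0$ for each $x$ with $u(x)=0$, which is stronger than statement~\ref{it:Var0} and in general false; and appealing to~\eqref{eq:Var1ineq} is circular since its derivation already requires statement~\ref{it:Var0}.
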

\begin{proof}
  First note that the {degree} $d_{v} = \sum_{x} v_{+}(x) = \sum_{x} v_{-}(x) = \frac{1}{2} ||v||_{1}$
  is piecewise linear in the following sense:
  \begin{itemize}
  \item Let $u,v\in\ker A$. Then there exists $\lambda_{1}>0$ such that
    \begin{equation}
      \label{eq:loclind}
      d_{u + \lambda v} = d_{u} + \lambda d'_{u}(v) \text{ for all }0\le\lambda\le\lambda_{1},
    \end{equation}
    where $d'_{u}(v) = \sum_{x:u>0} v(x) + \sum_{x:u=0} v_{+}(x) = v(u>0) + v_{+}(u=0) \in\Rb$ depends only on $u$ and $v$
    (but not on $\lambda$).
  \end{itemize}



  Fix $u, v\in\ker A$.  If $\epsilon>0$ is small enough then
  \begin{align*}
    \Dbarr(u + \epsilon v) &= \sum_{x} u(x) \log \frac{|u(x)|}{r(x)} + \sum_{x:u\neq 0} u(x) \log \left(1 + \epsilon
      \frac{v(x)}{u(x)} \right)
    \\
    & \qquad + \epsilon \sum_{x} v(x) \log \frac{|u(x) + \epsilon v(x)|}{r(x)}
    \\
    & = \Dbarr(u) + \epsilon \left(\sum_{x:u\neq 0} v(x) \log \frac{|u(x)|}{r(x)} + \sum_{x:u=0} v(x) \log \frac{|v(x)|}{r(x)}\right)\\
    & \qquad + \epsilon \log |\epsilon| v(u = 0) 
    + \epsilon v(u\neq 0) + \smo(\epsilon),
  \end{align*}
  where $\log(1+\epsilon x) = 1 + \epsilon x + \smo(\epsilon)$ was used.

  Using \eqref{eq:loclind} and \eqref{eq:Dbarr1} yields
  \begin{align}
    \Dbarr^{1}(u + \epsilon v) &= \Dbarr(u) - \epsilon \frac{d_{u}'(v)}{d_{u}^{2}} \Dbarr(u)
    \notag\\
    & \qquad + \frac{\epsilon}{d_{u}} \left(\sum_{x:u\neq 0} v(x) \log \frac{|u(x)|}{r(x)} + \sum_{x:u=0} v(x) \log \frac{|v(x)|}{r(x)}\right)
    \notag\\
    & \qquad + \frac{1}{d_{u}} \epsilon\log|\epsilon| v(u=0) + \epsilon v(u\neq 0) + \smo(\epsilon)
    \notag\\
    &= \Dbarr^{1}(u) - \epsilon \frac{d'_{u}(v)}{d_{u}} \Dbarr^{1}(u)
    \notag\\
    & \qquad + \frac{\epsilon}{d_{u}} \left(\sum_{x:u\neq 0} v(x) \log \frac{|u(x)|}{r(x)} + \sum_{x:u=0} v(x) \log \frac{|v(x)|}{r(x)}\right)
    \notag\\
    & \qquad + \frac{1}{d_{u}} \epsilon\log|\epsilon| v(u=0) + \epsilon v(u\neq 0) + \smo(\epsilon).
    \label{eq:Dbar-expansion}
  \end{align}

  Now let $u$ be a local maximizer of $\Dbarr$ in $\ker A$ subject to $d_{u}=1$.  Then $u$ is also a local maximizer of
  $\Dbarr^{1}$ by Remark \ref{sec:reformulatingDbarr}.  Therefore the first statement follows from the facts that the
  derivative of $\epsilon\log\epsilon$ diverges at zero and the coefficient $\frac{1}{d_{u}}v(u = 0)$ changes its sign
  if $v$ is replaced by $-v$.  Since $v(u\neq 0) = v(\Xcal) - v(u=0) = 0$ the inequality follows for all $v\in\ker A$.
  If $\supp(v)\subseteq\supp(u)$ then $d'_{u}(-v) = -v(u>0) = - d'_{u}(v)$.  In this case the left hand side of the
  inequality changes its sign when $v$ is replaced by $-v$, thus it holds as an equality.
\end{proof}
\begin{defi}
  \label{def:critptH}
  A point $u \in\ker A$ is called a \emph{quasi-critical point} of $\Dbarr$ if it satisfies the conditions~\ref{it:Var0}.{}
  and~\ref{it:Var1}.{} of proposition~\ref{prop:critpts}.
\end{defi}
The importance of this definition is that every local extremum of $\Dbarr$ is also a quasi-critical point by the above
proposition. This means that any convergent numerical optimisation algorithm will at least find a quasi-critical point.

\begin{rem}
  Condition~{\ref{eq:Var0}.} of Proposition~\ref{prop:critpts} depends on $u$ only through the support of $u$.
  Therefore it can be used as a necessary condition to test whether a maximizer of $\Dbarr$ can have a given support.
  Since this equation is linear in $v$ it is enough to check it for a basis of $\ker A$.
\end{rem}
\begin{rem}
  \label{rem:Kandker}
  Condition {\ref{it:Var1}.} of Proposition~\ref{prop:critpts} is also linear in $v$, since $d'_{u}(v) = v(u>0)$ is
  linear in this case.  Moreover, it is trivially satisfied for $v=u$.  This means that it is enough to check condition
  \ref{it:Var1} on a basis of any subspace $K\subset \ker A$ such that the span of $K$ and $u$ contains all $v\in\ker A$
  with $\supp (v) \subseteq \supp (u)$.  A possible choice is
  \begin{equation}
    K^{u} = \{ v\in\ker A: \supp (v) \subseteq \supp (u) \text{ and } d'_{u}(v) = 0 \}.
  \end{equation}
  In this subspace, the equations of proposition \ref{prop:critpts}, {\ref{it:Var1}.} simplify to
  \begin{equation}
    \label{eq:Var1-simple}
    \sum_{x:u\neq 0} v(x) \log \frac{|u(x)|}{r(x)} = 0
  \end{equation}
  for all $v\in K$.
\end{rem}

\section{The codimension one case}
\label{sec:codim1}

In this section the theory developed in the previous sections will be applied to the case where the exponential family
has codimension one.
\begin{ex}
  \label{ex:1}
  If $\ker A$ is onedimensional, then it is spanned by a single vector $u = P_{+} - P_{-}$, where $P_{+}$ and $P_{-}$
  are two probability measures.  If $H_{r}(P_{+}) = H_{r}(P_{-})$, then both $P_{+}$ and $P_{-}$ are global maximizers
  of $\DE$.  Otherwise assume that $H_{r}(P_{+}) < H_{r}(P_{-})$.  Then $P_{+}$ is the global maximizer of $\DE$.  Note
  that $-u$ is another local maximizer of $\Dbarr$.  It is easy to see that $P_{-}$ is also a local maximizer of $\DE$.

  This example can serve as a source of examples and counterexamples.  For example, it is easy to see that for a general
  exponential family, $\supp(P_{\Ecal})$ can be an arbitrary set~$\Ycal$ of cardinality greater or equal to two: Just
  choose two measures $P_{+}$, $P_{-}$ of disjoint support such that $\supp (P_{+})\cup\supp (P_{-}) = \Ycal$, let
  $u=P_{+}-P_{-}$ and choose a matrix $A$ such that $\ker A$ is spanned by $u$.  In the same way one can prove the
  following statements:
  \begin{itemize}
  \item Any set $\Ycal\subsetneq\Xcal$ with cardinality less than $|\Xcal|-1$ is the support of a global maximizer $P$
    of $\DE$ for some exponential family $\Ecal$.
  \item Any measure supported on a set $\Ycal\subsetneq\Xcal$ with cardinality less than $|\Xcal|$ is a local
    maximizer of $\DE$ for some exponential family $\Ecal$.
  \item Any measure supported on a set $\Ycal\subsetneq\Xcal$ with cardinality less than $|\Xcal|-1$ is a global
    maximizer of $\DE$ for some exponential family $\Ecal$.
  \end{itemize}
  Of course, these statements are not true anymore, when the reference measure is fixed or when the class of exponential
  families is restricted in any way.
\end{ex}
\begin{ex}
  \begin{figure}[ptb]
    \centering
    \includegraphics{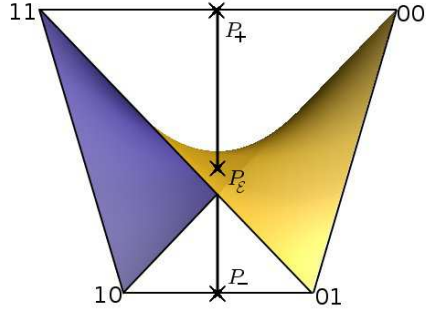}
    \caption{The binary independence model.}
    \label{fig:binind}
  \end{figure}
  As a special case of the previous example, consider the binary independence model with $\Xcal=\{00,01,10,11\}$,
  \begin{equation}
    A = \left(
      \begin{matrix}
        \smash{\overset{00}{1}} & \smash{\overset{01}{1}} & \smash{\overset{10}{0}} & \smash{\overset{11}{0}} \\
        0 & 0 & 1 & 1 \\
        1 & 0 & 1 & 0 \\
        0 & 1 & 0 & 1
      \end{matrix}
    \right),
  \end{equation}
  and $r(x)=1$ for all $x\in\Xcal$.  It is easy to see that $\Ecal$ consists of all probability measures $P$ which
  factorize as $P(x_{1}x_{2}) = P_{1}(x_{1})P_{2}(x_{2})$, justifying the name of this model.  The kernel is spanned by
  \begin{equation}
    u = (+1, -1, -1, +1),
  \end{equation}
  corresponding to two global maximizers $P_{+}=\frac{1}{2}(\delta_{00}+\delta_{11})$ and
  $P_{-}=\frac{1}{2}(\delta_{01}+\delta_{10})$ (see figure \ref{fig:binind}).
\end{ex}

\section{Solving the critical equations}
\label{sec:algo}

\newcommand{\Ms}{M^{\sigma}}%
\newcommand{\Msp}{M^{\prime\sigma}}%
Finding the maximizers of $\Dbarr$ has some advantages over directly finding the maximizers of $\DE$, mainly because of
two reasons:
\begin{enumerate}
\item The dimension of the problem is reduced: Instead of maximizing over the whole probability simplex the maximization
  takes place over a convex subset of the kernel of the matrix $A$.  Therefore the dimension of the problem is reduced
  by the dimension of the exponential family.
\item A projection on the exponential family is not needed: $\Dbarr$ can be computed by a ``simple'' formula.
\end{enumerate}
A numerical search for the maximizers using gradient search algorithms is now feasible for larger models. However, there
may be a lot of local maximizers, so it is still a difficult problem to find the global maximizers of $\DE$.  Of course,
the above ideas 
can also be used with symbolic calculations in order to investigate the maximizers of $\DE$.

In the following assume that the sufficient statistics matrix $A$ has only integer entries.  In this case the $\ker A$
has a basis of integer vectors.  An important class of examples where this condition is satisfied are hierarchical
models. 

Under these assumptions we turn to the equations of Proposition \ref{prop:critpts}.  The main observation is that
equation \eqref{eq:Var1} is algebraic for suitable $u$ once we fix the sign vector of $u$.  This motivates to look
independently at each possible sign vector $\sigma$ that occurs in $\ker A$.

\begin{rem}
  Before investigating the critical equations some short remarks on the sign vectors are necessary.  The set of possible
  sign vectors occuring in a vector space (in this case $\ker A$) forms a (realizable) \emph{oriented matroid}.  A sign
  vector $\sigma$ is called an \emph{(oriented) circuit} if its support $\{x\in\Xcal: \sigma_{x}\neq 0\}$ is inclusion
  minimal.  See the first chapter of \cite{BjoernerLasVergnasSturmfelsWhiteZiegler93:Oriented_Matroids} for an
  introduction to oriented matroids.

  Every sign vector can be written as a \emph{composition} $\sigma_{1}\circ\dots\circ\sigma_{n}$ of circuits, where
  $\circ$ is the associative operation defined by
  \begin{equation}
    (\sigma_{i}\circ\sigma_{i+1})_{x} = 
    \begin{cases}
      (\sigma_{i})_{x} & \text{ if }(\sigma_{i})_{x}\neq 0, \\
      (\sigma_{i+1})_{x} & \text{ else }.
    \end{cases}
  \end{equation}
  There is a free software package \verb|TOPCOM|\cite{TOPCOM} which computes the signed circuits of a matrix.  However,
  this package does not (yet) compute all the sign vectors, but this second step is easy to implement.

  There is a second possible algorithm for computing all sign vectors of an oriented matroid, which shall only be
  sketched here, since it uses the complicated notion of duality (see
  \cite{BjoernerLasVergnasSturmfelsWhiteZiegler93:Oriented_Matroids} for the details): Namely, the set of all sign
  vectors is characterized by the so-called \emph{orthogonality property}, meaning that the set of all sign vectors can
  be computed by calculating all \emph{cocircuits} and checking the orthogonality property on each possible vector
  $\sigma\in\{0,\pm1\}^{\Xcal}$.

  The nonzero sign vectors $\sigma$ occuring in a vector space always come in pairs $\pm\sigma$.  It is customary to
  list only one representative of each such pair.  This is not a problem, since the function $\Dbarr$ is antisymmetric,
  i.e., a local maximizer $u$ with sign vector $\sgn(u)=-\sigma$ corresponds to a local minimizer $-u$ with sign vector
  $\sigma$, and both will be quasi-critical points of $\Dbarr$.
\end{rem}

Now fix a sign vector $\sigma$ 
and choose $u_{0}\in \ker A$ such that $\sgn(u_{0}) = \sigma$ and $d_{u_{0}}=1$.  Denote $\Ycal := \supp(\sigma) =
\supp(u_{0})$.
Define $d^{\sigma}(v) := \sum_{x:\sigma_x>0} v(x)$.  This implies $d^{\sigma}(v) = d'_{u_{0}}(v)$ whenever
$\supp(v)\subseteq \supp(u_{0}) = \supp(\sigma)$.  Let
\begin{equation}
  K^{\sigma} := \big\{ v\in\ker A : d^{\sigma}(v) = 0\text{ and }\supp(v)\subseteq\supp(\sigma)\big\}.
\end{equation}
If $u\in\ker A$ satisfies $d_{u}=1$ and $\sgn(u)=\sigma$, then $u-u_{0}\in K^{\sigma}$.
%
%
By definition 
$u$ is a quasi-critical point of $\Dbarr$ if and only if
\begin{equation}
  \sum_{x\in\Ycal} v(x) \log\frac{|u(x)|}{r(x)} 
  = 0 \text{ for all } v\in K^{\sigma}
\end{equation}
(see Remark \ref{rem:Kandker}).  These equations are linear in $v$, so it is enough to consider them for a spanning set
of $K^{\sigma}$.  Since by assumption the matrix $A$ has only integer entries the set
\begin{equation}
  \label{eq:KsigmaZ}
  K^{\sigma}_{\Zb} := K^{\sigma}\cap\Zb^{\Xcal}
\end{equation}
contains a spanning set of $K^{\sigma}$.  Therefore $u$ is a quasi-critical point of $\Dbarr$ if and only if
\begin{equation}
  \sum_{x\in\Ycal} v(x) \log\frac{|u(x)|}{r(x)} 
  = 0 \text{ for all } v\in K^{\sigma}_{\Zb}.
\end{equation}
Exponentiating these equations gives
\begin{equation}
  \label{eq:critpolyt}
  \prod_{x\in\Ycal:v(x)>0} \left(\frac{\sigma_x u(x)}{r(x)}\right)^{v(x)}
  = 
  \prod_{x\in\Ycal:v(x)<0} \left(\frac{\sigma_x u(x)}{r(x)}\right)^{-v(x)} \text{ for all } v\in K^{\sigma}_{\Zb}.
\end{equation}
This is a system of \emph{polynomial} equations.  Every solution $u\in u_{0} + K^{\sigma}$ to this system that
satisfies 
$\sgn(u) = \sigma$ is a quasi-critical point of $\Dbarr$ and thus a potential maximizer.
%
%

At this point it is possible to do one more simplification: If $v\in K_{\Zb}^{\sigma}$, then $v(\sigma<0)=v(\sigma\neq0)
- v(\sigma>0) = 0$.  It follows that $v_{+}(\sigma<0) + v_{-}(\sigma<0) = 0$, so
\begin{equation}
  \prod_{x:v(x)>0}(\sigma_{x})^{v(x)} = (-1)^{v_{+}(\sigma<0)} = (-1)^{v_{-}(\sigma<0)} = \prod_{x:v(x)<0}(\sigma_{x})^{-v(x)}
\end{equation}
All in all this yields:
\begin{prop}
  \label{prop:critpoly}
  Fix a sign vector $\sigma\in\{\pm 1\}^{\Xcal}$. 
  Let $u\in\ker A$ satisfy $d_{u} = 1$ and
  \begin{equation}
  \label{eq:critpoly}
    \prod_{x\in\Ycal} \left(\frac{u(x)}{r(x)}\right)^{v_{+}(x)}
    = 
    \prod_{x\in\Ycal} \left(\frac{u(x)}{r(x)}\right)^{v_{-}(x)}
  \end{equation}
  for all $v = v_{+}-v_{-}\in K_{\Zb}^{\sigma}$.  If $\sgn\left(u\right) = \sigma$, then $u$ is a quasi-critical point
  of $\Dbarr$. Every quasi-critical point of $\Dbarr$ 
  arises in this way.
\end{prop}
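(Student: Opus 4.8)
The plan is to read Proposition~\ref{prop:critpoly} as a dictionary between the quasi-critical points of $\Dbarr$ (Definition~\ref{def:critptH}) and the solutions of the binomial system~\eqref{eq:critpoly}, and to prove it by chaining together equivalences most of which are already assembled in the discussion preceding the statement. Fix $u\in\ker A$ with $d_{u}=1$ and $\sgn(u)=\sigma$. Since then $\supp(u)=\supp(\sigma)=\Ycal$ and $d'_{u}(v)=d^{\sigma}(v)$ for every $v$ supported in $\Ycal$, the subspaces $K^{u}$ and $K^{\sigma}$ coincide, so by Definition~\ref{def:critptH} and Remark~\ref{rem:Kandker} the requirement that $u$ be quasi-critical reduces to condition~\ref{it:Var1} of Proposition~\ref{prop:critpts}, namely the logarithmic equations $\sum_{x\in\Ycal}v(x)\log\frac{|u(x)|}{r(x)}=0$ holding for all $v\in K^{\sigma}$; condition~\ref{it:Var0} of that proposition depends only on $\supp(\sigma)$ and is vacuous for the full-support sign vectors $\sigma\in\{\pm1\}^{\Xcal}$ under consideration. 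These equations are linear in $v$, and because $A$ is integral the space $K^{\sigma}$ is the real span of the integer lattice $K^{\sigma}_{\Zb}$, so it is equivalent to impose them only for $v\in K^{\sigma}_{\Zb}$.

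The remaining task is to exponentiate and to clean up the sign factors. For $v=v_{+}-v_{-}\in K^{\sigma}_{\Zb}$ the equation $\sum_{x\in\Ycal}v(x)\log\frac{|u(x)|}{r(x)}=0$ is equivalent to $\prod_{x\in\Ycal}\big(|u(x)|/r(x)\big)^{v_{+}(x)}=\prod_{x\in\Ycal}\big(|u(x)|/r(x)\big)^{v_{-}(x)}$, and this is a genuine equivalence — not merely an implication — precisely because the hypothesis $\sgn(u)=\sigma$ makes every factor $|u(x)|/r(x)$ strictly positive on $\Ycal$. Writing $u(x)=\sigma_{x}|u(x)|$ then produces the sign factors $\prod_{x:v(x)>0}\sigma_{x}^{v(x)}$ on the left and $\prod_{x:v(x)<0}\sigma_{x}^{-v(x)}$ on the right; as observed just before the proposition, every $v\in K^{\sigma}_{\Zb}$ satisfies $v(\sigma<0)=0$, hence $v_{+}(\sigma<0)=v_{-}(\sigma<0)$ and both factors equal $(-1)^{v_{+}(\sigma<0)}=(-1)^{v_{-}(\sigma<0)}$. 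Cancelling this common nonzero factor transforms the equation into exactly~\eqref{eq:critpoly}. Every link in this chain is reversible, which gives both directions of the first assertion; for the second assertion one takes any quasi-critical point $u$, normalizes it to $d_{u}=1$, puts $\sigma:=\sgn(u)$ (restricting the state space to $\supp(u)$ beforehand if $u$ does not have full support), and runs the chain backwards.

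I do not expect a conceptual obstacle: the proposition is a repackaging of Proposition~\ref{prop:critpts} and Remark~\ref{rem:Kandker}. The points that do need care, all elementary but easy to slip on, are: (i) that passing from the logarithmic equations to~\eqref{eq:critpolyt} is an equivalence, which rests on the strict positivity of $|u(x)|/r(x)$ on $\Ycal$ and hence on having $\sgn(u)=\sigma$ among the hypotheses; (ii) the sign bookkeeping that replaces $|u(x)|$ by $u(x)$, where one must verify that the two emerging sign factors are equal and nonzero, so that~\eqref{eq:critpolyt} and~\eqref{eq:critpoly} are equivalent systems; and (iii) the reduction from $K^{\sigma}$ to its integer lattice $K^{\sigma}_{\Zb}$, which uses the integrality (equivalently, rationality) of $A$. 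A last, purely formal, remark is that condition~\ref{it:Var0} of Proposition~\ref{prop:critpts} imposes nothing when $\sigma$ has full support, so it can be left out of the statement of this proposition.
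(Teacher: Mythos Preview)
Your proposal is correct and follows the same route as the paper: the proposition is indeed just a repackaging of the discussion immediately preceding it, and you have identified the same chain of equivalences (Remark~\ref{rem:Kandker} $\Rightarrow$ logarithmic equations over $K^{\sigma}$ $\Rightarrow$ restriction to $K^{\sigma}_{\Zb}$ via integrality of $A$ $\Rightarrow$ exponentiation to~\eqref{eq:critpolyt} $\Rightarrow$ sign cancellation to~\eqref{eq:critpoly}). Your treatment is in fact slightly more explicit than the paper's about the reversibility of each step and about why condition~\ref{it:Var0} is vacuous for $\sigma\in\{\pm1\}^{\Xcal}$.
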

\begin{rem}
  \label{rem:choiceofequations}
  Note that the system of equations \eqref{eq:critpoly} still contains infinitely many equations.  The argument before
  equation \eqref{eq:KsigmaZ} shows that a finite number of equations is enough.  However, there are different possible
  choices for this finite set (at least a basis of $K_{\Zb}^{\sigma}$ is needed), and the choice may have a large
  computational impact.  This issue will be addressed below.
\end{rem}

Proposition \ref{prop:critpoly} shows that the maximizers of $\Dbarr$ can be found by analyzing all the solutions to the
algebraic systems of equations \eqref{eq:critpoly} for all different possible sign vectors~$\sigma$.  Since the analysis
of systems of polynomials works best over the complex numbers, in the following these equations will be considered as
complex equations in the variables~$u(x)$.  Of course, only real solutions with the right sign pattern will be candidate
solutions of the original maximization problem.

From now on fix $\sigma$ again.  Define $I_{2}^{\sigma} \subseteq \Cb[u(x) : x\in\Ycal]$ to be the
\emph{ideal}\footnote{The mathematicel disciplines of studying polynomial equations and their solution sets are
  commutative algebra and algebraic geometry.  In the following some definitions from these two fields are used.  The
  reader is refered to \cite{CoxLittleOShea08:Ideals_Varieties_Algorithms} for exact definitions and the basic facts.}
generated by
all equations \eqref{eq:critpoly} in the polynomial ring $\Cb[u(x) : x\in\Ycal]$ with one variable for each $x\in\Ycal$.
Similarly, let $I_{1}^{\sigma} \subseteq \Cb[u(x) : x\in\Ycal]$ be the ideal generated by the equations
\begin{equation}
  \sum_{x\in\Ycal}A_{i,x}u(x) = 0,\qquad\text{ for all }i.
\end{equation}
Finally let $I^{\sigma}:= I_{1}^{\sigma}+I_{2}^{\sigma}$.  The set of all common complex solutions of all equations in
$I^{\sigma}$ is an algebraic subvariety of $\Cb^{\Ycal}$ and will be denoted by $X^{\sigma}$.

\begin{rem}
  \label{rem:projvar}
  Note that we omitted the equation $d_{u}-1 = 0$ in the definition of the ideal.  It is easy to see that we can ignore
  this condition at first, because every solution satisfying $\sgn(u) = \sigma$ has $d(u)\neq 0$ and can thus be
  normalized to a solution with $d_{u} = 1$.  In other words, the original problem is solved once all points on the
  variety $X^{\sigma}$ that satisfy the sign condition are known.  The algebraic reason for this fact is that all the
  defining equations of $I$ are homogeneous.  This means that we can also replace $X^{\sigma}$ by the projective variety
  corresponding to $I^{\sigma}$, which is another interpretation of the fact that the normalization does not matter at
  this point.
\end{rem}
Both ideals $I_{1}^{\sigma}$ and $I_{2}^{\sigma}$ taken for themselves are very nice: $I_{1}^{\sigma}$ corresponds to a
system of linear equations, so it can be treated by the methods of linear algebra.  On the other hand, $I_{2}^{\sigma}$
is a system of binomial equations, and there are a lot of theoretical results and fast algorithms for binomial
equations\cite{EisenbudSturmfels96:Binomial_Ideals,Kahle10:Decompositions_of_binomial_ideals}.  However, the sum of a
linear ideal and a binomial ideal can be arbitrarily complicated.  In fact, it is easy to see that any ideal can be
reparameterized as a sum of a linear ideal and a binomial ideal: For example, a polynomial equation $\sum_{i}m_{i}=0$,
where $m_{i}$ are arbitrary monomials, is equivalent to the system of equations
\begin{gather*}
  z_{i} - m_{i} = 0, \text{ for all }i,\\
  \sum_{i}z_{i} = 0,
\end{gather*}
where one additional variable $z_{i}$ has been introduced for every monomial.  Still, the two ideals $I_{1}^{\sigma}$
and $I_{2}^{\sigma}$ under consideration here are closely related, so there is hope that general statements can be made.

$X^{\sigma}$ equals the intersection of $X_{1}^{\sigma}$ and $X_{2}^{\sigma}$, where $X_{1}^{\sigma}$ and
$X_{2}^{\sigma}$ are the varieties of $I_{1}^{\sigma}$ and $X_{2}^{\sigma}$ respectively.  The variety $X_{1}^{\sigma}$
is easy to determine: By definition it is given by the (complex) kernel of $A$ restricted to $\Ycal$:
\begin{equation}
  X_{1}^{\sigma}=\ker_{\Cb} A \cap\Cb^{\Ycal}.
\end{equation}
The variety $X_{2}^{\sigma}$ is a little bit more complicated, but still a lot can be said.

By definition, $I_{2}^{\sigma}$ is generated by a countable collection of binomials.  In fact, Hilbert's Basissatz shows
that a finite subset of the generators of $I_{2}^{\sigma}$ is sufficient to generate the ideal.  In general it can be a
difficult task to find such a finite subset, but since equations \eqref{eq:critpoly} correspond to directional
derivatives, it is sufficient to consider them for any basis $B$ of $K^{\sigma}$ (see Remark
\ref{rem:choiceofequations}).  So denote the ideal generated by the equations corresponding to a basis $B$ of $\ker A$
by $I_{2}(B)$.  In general $I_{2}(B)$ will have a different solution variety $V(I_{2}(B))$ than $I_{2}^{\sigma}$, and
moreover $V(I_{2}(B))$ will depend on $B$.  From what was said above all these varieties agree on the orthant of
$\Rb^{\Xcal}$ defined by $\sgn = \sigma$.  The presence of additional (complex) solutions outside this orthant may
complicate the algebraic analysis.  It is obvious that all the ideals $I_{2}(B)$ are contained in $I_{2}^{\sigma}$.
This means that $I_{2}^{\sigma}$ has the smallest solution set, so a finite generating set of $I_{2}^{\sigma}$ would be
useful.

More precisely, since the ideal $I_{2}^{\sigma}$ is generated by binomials, the theory of
\cite{EisenbudSturmfels96:Binomial_Ideals} applies.  Corollary 2.6 of this work implies that the ideal $I_{2}^{\sigma}$
is a prime ideal.
This means that $X_{2}^{\sigma}$ is \emph{irreducible}, i.e., it can not be written as a union of two proper
subvarieties.  Binomial prime ideals are also called \emph{toric ideals}\cite[remark before Corollary
2.6]{EisenbudSturmfels96:Binomial_Ideals}.  However, it is easy to construct examples such that $I_{2}(B)$ is not
irreducible.

Fortunately there are fast computer algorithms, implemented in the software package \verb|4ti2|\cite{4ti2}, which can be
used to compute a finite generating set of $I_{2}^{\sigma}$ \cite{HemmeckeMalkin09:Generating_Sets_Lattice_Ideals}.
These algorithms compute finite generating sets of so-called \emph{lattice ideals}.  It turns out that $I_{2}^{\sigma}$
becomes a lattice ideal after a rescaling of the coordinates.  To be concrete, writing $u_{r}(x) := \frac{u(x)}{r(x)}$
yields a new, equivalent ideal $I_{2,r}^{\sigma}\subseteq\Cb[u_{r}(x):x\in\Ycal]$ generated by the binomials
\begin{equation}
  \prod_{x:v>0} u_{r}(x)^{v(x)} = \prod_{x:v<0} u_{r}(x)^{-v(x)},\quad\text{ for all }v\in K^{\sigma}_{\Zb}.
\end{equation}
The ideal $I_{2,r}^{\sigma}$ is called a lattice ideal, since it is related to the integer lattice
$K^{\sigma}_{\Zb}\subseteq\Zb^{\Ycal}$.



Now we turn to $X^{\sigma} = X_{1}^{\sigma} \cap X_{2}^{\sigma}$.  Even though $X_{1}^{\sigma}$ and $X_{2}^{\sigma}$ are
irreducible, in general $X^{\sigma}$ will be reducible.
This means that we can write $X^{\sigma}$ as a finite union of irreducible components $X^{\sigma} =
V^{\sigma}_{1}\cup\dots\cup V^{\sigma}_{c}$.  To each of these components $V^{\sigma}_{i}$ corresponds a polynomial
ideal $I_{i}^{\sigma}$, and we have $u\in X^{\sigma}$ if and only if $u$ solves (at least) one of these ideals.  The
procedure to obtain the ideals $I_{i}^{\sigma}$
is called \emph{primary decomposition}.

If an irreducible component $V^{\sigma}_{i}$ is zero-dimensional, then it consists of only one point, and it is easy to
check whether this unique element $u\in X^{\sigma}_{i}$ satisfies $\sgn(u) = \sigma$.  However, components of positive
dimension may arise.  In this case it is not easy to see whether these components contain elements $u$ satisfying
$\sgn(u) = \sigma$. Fortunately, in many cases this information is not required:

\begin{thm}
  \label{thm:Dsigmaind}
  Let $u$ be an element of an irreducible component $V$ of $X^{\sigma}$ such that $d_{\sigma}(u)=1$.  Suppose there
  exists $u_{0}\in V$ such that $d_{\sigma}(u_{0})=1$ and $\sgn(u_{0})=\sigma$.
  Then
  \begin{equation}
    \Dbarr(u_{0}) = \sum_{x\in\Ycal} \Re(u(x))\log\frac{|u(x)|}{r(x)}.
  \end{equation}
\end{thm}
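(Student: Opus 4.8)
The plan is to pass to complex coefficients and argue on the single irreducible component $V$. Write $\Dbarr(u)=\sum_{x\in\Ycal}u(x)\log\frac{|u(x)|}{r(x)}$ also for $u\in\Cb^{\Ycal}$ with all coordinates nonzero — the same formula as~\eqref{eq:H1} — so that the right-hand side of the claimed identity is $\Re\big(\Dbarr(u)\big)$; note also that $u\mapsto\sum_{x\in\Ycal}\Re(u(x))\log\frac{|u(x)|}{r(x)}$ is continuous on all of $\Cb^{\Ycal}$ with the convention $0\log 0=0$. Put
\[ W:=V\cap\{d_{\sigma}=1\}\cap\{u:u(x)\neq0\text{ for all }x\in\Ycal\}. \]
This is a nonempty Zariski-open, hence dense, subset of $V\cap\{d_{\sigma}=1\}$: it contains $u_{0}$, whose coordinates on $\Ycal=\supp(\sigma)$ are nonzero because $\sgn(u_{0})=\sigma$. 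It therefore suffices to prove $\Dbarr\equiv\Dbarr(u_{0})$ on $W$; the general statement then follows by density and continuity.

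The first step is to extract the algebraic content of the binomial equations. Any $u\in W$ lies on $X_{2}^{\sigma}$, hence satisfies~\eqref{eq:critpoly}; since its coordinates are nonzero, these read $\prod_{x\in\Ycal}(u(x)/r(x))^{v(x)}=1$ for all $v\in K^{\sigma}_{\Zb}$, so $\sum_{x\in\Ycal}v(x)\log\frac{|u(x)|}{r(x)}=0$ for all $v\in K^{\sigma}_{\Zb}$, and therefore for all $v$ in the complex span $K^{\sigma}_{\Cb}$, which is the hyperplane $\{v\in X_{1}^{\sigma}:d_{\sigma}(v)=0\}$ of $X_{1}^{\sigma}=\ker_{\Cb}A\cap\Cb^{\Ycal}$ (it is a proper hyperplane since $d_{\sigma}(u_{0})=1$). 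Consequently the $\Cb$-linear functional $v\mapsto\sum_{x\in\Ycal}v(x)\log\frac{|u(x)|}{r(x)}$ on $X_{1}^{\sigma}$ is a multiple of $d_{\sigma}$; evaluating at $v=u$ (where $d_{\sigma}(u)=1$) identifies the multiple as $\Dbarr(u)$, so
\[ \sum_{x\in\Ycal}v(x)\log\frac{|u(x)|}{r(x)}=\Dbarr(u)\,d_{\sigma}(v)\qquad\text{for all }v\in X_{1}^{\sigma}. \]
Evaluating instead at $v=u_{0}$ shows that on $W$ one has $\Dbarr(u)=\phi(u):=\sum_{x\in\Ycal}u_{0}(x)\log\frac{|u(x)|}{r(x)}$, a \emph{real}-valued, real-analytic function of $u\in(\Cb^{*})^{\Ycal}$.

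The second step is to show that $\phi$ is locally constant on $S$, the set of smooth points of $V\cap\{d_{\sigma}=1\}$ with no vanishing coordinate; this is a nonempty Zariski-open — hence irreducible, hence classically connected — complex submanifold of $\Cb^{\Ycal}$. Let $u\in S$ and let $w$ be a tangent vector to $V\cap\{d_{\sigma}=1\}$ at $u$; then $w$ is annihilated by the differentials at $u$ of the equations defining $X_{1}^{\sigma}$ and $X_{2}^{\sigma}$. From $X_{1}^{\sigma}$: $w\in\ker_{\Cb}A$, so $\sum_{x}w(x)=0$ (the constant row in $A$). From $X_{2}^{\sigma}$: differentiating the binomial generators at $u$ (whose coordinates are nonzero) gives $\sum_{x}v(x)\,w(x)/u(x)=0$ for all $v\in K^{\sigma}$, i.e. $\big(w(x)/u(x)\big)_{x\in\Ycal}\in(K^{\sigma})^{\perp}=\operatorname{rowspan}_{\Cb}(A|_{\Ycal})+\Cb\,\mathbf 1_{\sigma>0}$, where $A|_{\Ycal}$ is the submatrix of columns indexed by $\Ycal$ and $\mathbf 1_{\sigma>0}$ is the indicator vector of $\{x\in\Ycal:\sigma_{x}>0\}$. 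Writing $\big(w(x)/u(x)\big)_{x}=\rho+\lambda\,\mathbf 1_{\sigma>0}$ with $\rho\in\operatorname{rowspan}_{\Cb}(A|_{\Ycal})$, one has $\sum_{x}u(x)\rho(x)=0$ (since $u\in\ker_{\Cb}A$ and $\rho$ lies in the row span of $A$) and $\sum_{x:\sigma_{x}>0}u(x)=d_{\sigma}(u)=1$, so $0=\sum_{x}w(x)=\sum_{x}u(x)\rho(x)+\lambda\sum_{x:\sigma_{x}>0}u(x)=\lambda$; hence $\lambda=0$ and $\big(w(x)/u(x)\big)_{x}\in\operatorname{rowspan}_{\Cb}(A|_{\Ycal})$. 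Since $u_{0}(x)\in\Rb$, the differential of $\phi$ at $u$ is $d\phi_{u}(w)=\sum_{x}u_{0}(x)\,\Re\big(w(x)/u(x)\big)=\Re\big(\sum_{x}u_{0}(x)\rho(x)\big)=\Re\langle u_{0},\rho\rangle=0$, the last equality because $u_{0}\in\ker A$ and $\rho$ lies in the row span. Thus $\phi$ is locally constant, hence constant, on the connected manifold $S$; since $\phi$ is continuous on $W$, $S$ is dense in $W$, and $u_{0}\in W$, the common value is $\phi(u_{0})=\Dbarr(u_{0})$. Therefore $\Dbarr=\phi\equiv\Dbarr(u_{0})$ on $W$, so $\sum_{x\in\Ycal}\Re(u(x))\log\frac{|u(x)|}{r(x)}=\Re\big(\Dbarr(u)\big)=\Dbarr(u_{0})$ on $W$, and then on all of $V\cap\{d_{\sigma}=1\}$ by density and continuity.

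The main difficulty I anticipate is the differential computation in the second step — assembling the three ingredients (the relation obtained from the binomial equations, the constant row of $A$ forcing $\sum_{x}w(x)=0$, and the description of the tangent space of the toric variety $X_{2}^{\sigma}$) so as to conclude $\lambda=0$ and hence $d\phi_{u}=0$ — together with the need to pass carefully between $V\cap\{d_{\sigma}=1\}$ and its smooth, coordinate-nonvanishing part. The latter is what makes it essential to fix one irreducible component $V$: the value of $\phi$ may genuinely differ from component to component of $X^{\sigma}$, which is exactly why the statement is phrased relative to $V$.
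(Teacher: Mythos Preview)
Your proof is correct and takes a genuinely different route from the paper's.

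The paper connects $u_{0}$ to $u$ by a smooth path inside the Zariski-open set $V'=\{v\in V:v(x)\neq0,\ d_{\sigma}(v)\neq0\}$, analytically continues the logarithms $\log(\sigma_{x}\gamma_{t}(x)/r(x))$ along the path, and observes that the functional $s_{t}(v)=\frac{1}{2\pi i}\sum_{x}v(x)\log^{t,x}(\sigma_{x}\gamma_{t}(x)/r(x))$ takes values in the dual lattice $K^{\sigma*}_{\Zb}$ for each $t$; by discreteness and continuity $s_{t}$ is constant, and then integrating $f(t)=\sum_{x}\gamma_{t}(x)\log^{t,x}(\sigma_{x}\gamma_{t}(x)/r(x))$ and taking real parts gives the claim.

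Your key simplification is the first step: taking absolute values in the binomial equations immediately gives $\sum_{x}v(x)\log(|u(x)|/r(x))=0$ for $v\in K^{\sigma}$, which lets you trade $\Dbarr(u)$ for $\phi(u)=\sum_{x}u_{0}(x)\log(|u(x)|/r(x))$ --- a globally defined, single-valued, real-analytic function. This completely sidesteps the branch-tracking and monodromy argument. The second step then becomes a clean tangent-space computation on the toric piece: $(w(x)/u(x))_{x}\in\operatorname{rowspan}(A|_{\Ycal})+\Cb\,\mathbf 1_{\sigma>0}$, the constant row of $A$ kills the $\mathbf 1_{\sigma>0}$-coefficient via $\sum_{x}w(x)=0$ and $d_{\sigma}(u)=1$, and then $u_{0}\in\ker A$ kills the remaining rowspan part. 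Both proofs ultimately rest on the same fact (a nonempty Zariski-open subset of an irreducible complex variety is classically connected), but yours uses it only once, for $S$, and avoids the path-integration machinery entirely. What the paper's approach buys is a slightly more explicit description of how the imaginary parts behave along the continuation; what yours buys is a shorter argument with no multivalued functions.
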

\begin{proof}
  Let
  \begin{equation}
    V':=\{v\in V:v(x)\neq 0\text{ for all }x\in\Ycal\text{, and }d_{\sigma}(v)\neq 0\}. 
  \end{equation}
  Then $V'$ is a Zariski-open subset of $V$, hence $V'$ is irreducible.  This implies that $V'$ is pathconnected,
  so there exists a smooth path $\gamma:[0,1]\to V'$ from $u$ to $u_{0}$.  This is obvious if $V'$ is regular, since
  then $V'$ is a locally pathconnected and connected complex manifold.  It follows that all regular points can be
  connected by a smooth path.  Finally, every singular point $p$ can be linked by a smooth path to some regular point in
  any neighbourhood of $p$.  By Remark \ref{rem:projvar} this path can be chosen such that $d_{\sigma}(\gamma_{t})=1$
  for all $t\in[0,1]$.

  Fix a point $u\in V'$ and fix a convention for the logarithm.  For every $x\in\Ycal$ the logarihm can be continued to
  a map $t\mapsto \log^{t,x}(\gamma_{t}(x))$.  For every $t\in[0,1]$ define a linear functional $s_{t}:
  K^{\sigma}_{\Cb}\to\Cb$ via 
  \begin{equation}
    s_{t}(v) = \frac{1}{2\pi i} \sum_{x\in\Ycal}v(x)\log^{t,x}\frac{\sigma_{x}\gamma_{t}(x)}{r(x)}.
  \end{equation}
  By definition of $X^{\sigma}$ it follows that $s_{t}$ takes only integer values on $K^{\sigma}_{\Zb}$, and $s_{t}$ can
  be identified with an element of the dual lattice $K^{\sigma*}_{\Zb}$ of $K_{\Zb}^{\sigma}$.  Since
  $K_{\Zb}^{\sigma*}$ is a discrete subset of the dual vector space $K_{\Cb}^{\sigma*}$ and since the map $t\mapsto
  s_{t}$ is continuous $s_{t}$ is constant along $\gamma$.

  Now consider the function $f(t) =
  \sum_{x\in\Ycal}\gamma_{t}(x)\log^{t,x}\left(\frac{\sigma_{x}\gamma_{t}(x)}{r(x)}\right)$.  Its derivative is $f'(t) =
  \sum_{x\in\Ycal} \gamma'_{t}(x)\log^{t,x}\left(\frac{\sigma_{x}\gamma_{t}(x)}{r(x)}\right) = 2\pi i
  s_{0}(\gamma'_{t})$, where $\gamma'_{t}(x)=\pd{t}\gamma_{t}(x)\in K_{\Zb}^{\sigma}$.
  It follows that $f(1) - f(0) = 2\pi i s_{0}(\gamma_{1} - \gamma_{0})$.  In other words,
  \begin{equation}
    \sum_{x\in\Ycal} u_{0}(x)\log^{1,x}\frac{\sigma_{x}u_{0}(x)}{r(x)} = \sum_{x\in\Ycal} u(x)\log \frac{\sigma_{x}u(x)}{r(x)} + 2\pi i s_{0}(u_{0} - u).
  \end{equation}
  If $\log^{1,x}(\sigma_{x}u_{0}(x)) = \log^{x}(\sigma_{x}u_{0}(x)) + 2\pi i k_{x}$ with $k_{x}\in\Zb$, then
  \begin{equation}
    \Dbarr(u_{0}) = f(0) + 2\pi i \left(s_{0}(u_{0} - u) - \sum_{x\in\Ycal} u_{0}(x) k_{x}\right).
  \end{equation}
  Taking the real parts of this equation gives
  \begin{align*}
    \Dbarr(u_{0})
    & = \Re(f(0)) + 2\pi s_{0}(\Im(u))
    \\ & = \Re(f(0)) - i \sum_{x\in\Ycal} \Im(u(x))\log\frac{\sigma_{x}u(x)}{r(x)}
    \\ & = \Re\left(\sum_{x\in\Ycal} \Re(u(x))\log\frac{\sigma_{x}u(x)}{r(x)}\right)
    \\ & = \sum_{x\in\Ycal} \Re(u(x))\log\frac{|u(x)|}{r(x)}.
  \end{align*}
  By continuity this formula continues to hold on the closure of $V'$, which equals $V$.
\end{proof}

The theorem implies that in many cases only one point $u$ from each irreducible component of $X^{\sigma}$ needs to be
tested.  Only if $\sum_{x\in\Ycal}\Re(u(x))\log\frac{|u(x)|}{r(x)}$ is exceptionally large it is necessary to analyze
this irreducible component further and see if there is a real point $u_{0}$ from the same irreducible component that
satisfies the sign condition.

\begin{rem}
  The above theorem also makes it possible to use methods of numerical algebraic
  geometry\cite{SommeseWampler:Numerical_Solution_Polynomials}.  These methods can determine the number of irreducible
  components and their dimensions.  Additionally it is possible to sample points from any irreducible component.  In
  fact, each component is represented by a so-called \emph{witness set}, a set of elements of this component.  These
  points can then be used to numerically evaluate $\Dbarr$.  One implementation, available on the Internet, is
  Bertini\cite{Bertini}.
\end{rem}

Let $x\in\Ycal$.  For every irreducible component $X^{\sigma}_{i}$ there are the following alternatives:
\begin{itemize}
\item Either $u(x)=0$ for all $u\in X^{\sigma}_{i}$.  In this case $\sgn(u)\neq \sigma$ on $X^{\sigma}_{i}$.
\item Or $u(x) = 0$ holds only on a subset of measure zero.
\end{itemize}
The reason for this is that the equation $u(x)=0$ defines a closed subset of $X^{\sigma}_{i}$, and either this closed
subset is all of $X^{\sigma}_{i}$, or it has codimension one (this argument needs the irreducibility of
$X^{\sigma}_{i}$).

When computing the primary decomposition the irreducible components of the first kind can be excluded algebraically by a
method called saturation: Namely, the variety corresponding to the saturation
\begin{equation}
  \left(I^{\sigma}:(\prod_{x\in\Ycal}u(x))^{\infty}\right)
  = \Big\{ f\in\Cb[ u(x)]: f m\in I^{\sigma} \text{ for some monomial }m\in\Cb[ u(x)]\Big\}
\end{equation}
consists only of those irreducible components of $X^{\sigma}$ which are not contained in any coordinate plane.  In the
same way we may also saturate by the polynomial $d^{\sigma}(M)$, since any solution $M$ with $\sgn(M)=\sigma$ will have
$0 \neq d(M) = d^{\sigma}(M)$.

The main reason why saturation is important is that it may reduce the complexity of symbolic calculations.

\begin{ex}
  \label{ex:4-2}
  The above ideas can be applied to the hierarchical model (see \cite{Lauritzen96:Graphical_Models}) of pair
  interactions among four binary random variables (the ``binary 4-2 model'').  This exponential family consists of all
  probability distributions of full support which factor as a product of functions that depend on only two of the four
  random variables.

  The maximization problem of this model is related to orthogonal latin squares: If the binary random variables are
  replaced by random variables of size $k$, then the maximizer of the corresponding 4-2 model is easy to find if two
  orthogonal latin squares of size $k$ exist\cite{Matus09:Divergence_from_factorizable_distributions}, and in this case
  the maximum value of $\DE$ equals $2 \log(k)$.  From this point of view, the following discussion will give an
  extremely complicated proof of the trivial fact that there are no two orthogonal latin squares of size two.

  The sufficient statistics may be chosen as
  \begin{equation*}
    A_{4-2}=
    \left(
      \begin{matrix}
        1 \phantom{+} 0 \phantom{+} 0 \phantom{+} 0 \phantom{+} 1 \phantom{+} 0 \phantom{+} 0 \phantom{+} 0 \phantom{+} 1 \phantom{+} 0 \phantom{+} 0 \phantom{+} 0 \phantom{+} 1 \phantom{+} 0 \phantom{+} 0 \phantom{+} 0\\
        0 \phantom{+} 1 \phantom{+} 0 \phantom{+} 0 \phantom{+} 0 \phantom{+} 1 \phantom{+} 0 \phantom{+} 0 \phantom{+} 0 \phantom{+} 1 \phantom{+} 0 \phantom{+} 0 \phantom{+} 0 \phantom{+} 1 \phantom{+} 0 \phantom{+} 0\\
        0 \phantom{+} 0 \phantom{+} 1 \phantom{+} 0 \phantom{+} 0 \phantom{+} 0 \phantom{+} 1 \phantom{+} 0 \phantom{+} 0 \phantom{+} 0 \phantom{+} 1 \phantom{+} 0 \phantom{+} 0 \phantom{+} 0 \phantom{+} 1 \phantom{+} 0\\
        1 \phantom{+} 0 \phantom{+} 1 \phantom{+} 0 \phantom{+} 0 \phantom{+} 0 \phantom{+} 0 \phantom{+} 0 \phantom{+} 1 \phantom{+} 0 \phantom{+} 1 \phantom{+} 0 \phantom{+} 0 \phantom{+} 0 \phantom{+} 0 \phantom{+} 0\\
        0 \phantom{+} 1 \phantom{+} 0 \phantom{+} 1 \phantom{+} 0 \phantom{+} 0 \phantom{+} 0 \phantom{+} 0 \phantom{+} 0 \phantom{+} 1 \phantom{+} 0 \phantom{+} 1 \phantom{+} 0 \phantom{+} 0 \phantom{+} 0 \phantom{+} 0\\
        0 \phantom{+} 0 \phantom{+} 0 \phantom{+} 0 \phantom{+} 1 \phantom{+} 0 \phantom{+} 1 \phantom{+} 0 \phantom{+} 0 \phantom{+} 0 \phantom{+} 0 \phantom{+} 0 \phantom{+} 1 \phantom{+} 0 \phantom{+} 1 \phantom{+} 0\\
        1 \phantom{+} 0 \phantom{+} 1 \phantom{+} 0 \phantom{+} 1 \phantom{+} 0 \phantom{+} 1 \phantom{+} 0 \phantom{+} 0 \phantom{+} 0 \phantom{+} 0 \phantom{+} 0 \phantom{+} 0 \phantom{+} 0 \phantom{+} 0 \phantom{+} 0\\
        0 \phantom{+} 1 \phantom{+} 0 \phantom{+} 1 \phantom{+} 0 \phantom{+} 1 \phantom{+} 0 \phantom{+} 1 \phantom{+} 0 \phantom{+} 0 \phantom{+} 0 \phantom{+} 0 \phantom{+} 0 \phantom{+} 0 \phantom{+} 0 \phantom{+} 0\\
        0 \phantom{+} 0 \phantom{+} 0 \phantom{+} 0 \phantom{+} 0 \phantom{+} 0 \phantom{+} 0 \phantom{+} 0 \phantom{+} 1 \phantom{+} 0 \phantom{+} 1 \phantom{+} 0 \phantom{+} 1 \phantom{+} 0 \phantom{+} 1 \phantom{+} 0\\
        1 \phantom{+} 1 \phantom{+} 0 \phantom{+} 0 \phantom{+} 0 \phantom{+} 0 \phantom{+} 0 \phantom{+} 0 \phantom{+} 1 \phantom{+} 1 \phantom{+} 0 \phantom{+} 0 \phantom{+} 0 \phantom{+} 0 \phantom{+} 0 \phantom{+} 0\\
        0 \phantom{+} 0 \phantom{+} 1 \phantom{+} 1 \phantom{+} 0 \phantom{+} 0 \phantom{+} 0 \phantom{+} 0 \phantom{+} 0 \phantom{+} 0 \phantom{+} 1 \phantom{+} 1 \phantom{+} 0 \phantom{+} 0 \phantom{+} 0 \phantom{+} 0\\
        0 \phantom{+} 0 \phantom{+} 0 \phantom{+} 0 \phantom{+} 1 \phantom{+} 1 \phantom{+} 0 \phantom{+} 0 \phantom{+} 0 \phantom{+} 0 \phantom{+} 0 \phantom{+} 0 \phantom{+} 1 \phantom{+} 1 \phantom{+} 0 \phantom{+} 0\\
        1 \phantom{+} 1 \phantom{+} 0 \phantom{+} 0 \phantom{+} 1 \phantom{+} 1 \phantom{+} 0 \phantom{+} 0 \phantom{+} 0 \phantom{+} 0 \phantom{+} 0 \phantom{+} 0 \phantom{+} 0 \phantom{+} 0 \phantom{+} 0 \phantom{+} 0\\
        0 \phantom{+} 0 \phantom{+} 1 \phantom{+} 1 \phantom{+} 0 \phantom{+} 0 \phantom{+} 1 \phantom{+} 1 \phantom{+} 0 \phantom{+} 0 \phantom{+} 0 \phantom{+} 0 \phantom{+} 0 \phantom{+} 0 \phantom{+} 0 \phantom{+} 0\\
        0 \phantom{+} 0 \phantom{+} 0 \phantom{+} 0 \phantom{+} 0 \phantom{+} 0 \phantom{+} 0 \phantom{+} 0 \phantom{+} 1 \phantom{+} 1 \phantom{+} 0 \phantom{+} 0 \phantom{+} 1 \phantom{+} 1 \phantom{+} 0 \phantom{+} 0\\
        1 \phantom{+} 1 \phantom{+} 1 \phantom{+} 1 \phantom{+} 0 \phantom{+} 0 \phantom{+} 0 \phantom{+} 0 \phantom{+} 0 \phantom{+} 0 \phantom{+} 0 \phantom{+} 0 \phantom{+} 0 \phantom{+} 0 \phantom{+} 0 \phantom{+} 0\\
        0 \phantom{+} 0 \phantom{+} 0 \phantom{+} 0 \phantom{+} 1 \phantom{+} 1 \phantom{+} 1 \phantom{+} 1 \phantom{+} 0 \phantom{+} 0 \phantom{+} 0 \phantom{+} 0 \phantom{+} 0 \phantom{+} 0 \phantom{+} 0 \phantom{+} 0\\
        0 \phantom{+} 0 \phantom{+} 0 \phantom{+} 0 \phantom{+} 0 \phantom{+} 0 \phantom{+} 0 \phantom{+} 0 \phantom{+} 1 \phantom{+} 1 \phantom{+} 1 \phantom{+} 1 \phantom{+} 0 \phantom{+} 0 \phantom{+} 0 \phantom{+} 0\\
        1 \phantom{+} 1 \phantom{+} 1 \phantom{+} 1 \phantom{+} 1 \phantom{+} 1 \phantom{+} 1 \phantom{+} 1 \phantom{+} 1 \phantom{+} 1 \phantom{+} 1 \phantom{+} 1 \phantom{+} 1 \phantom{+} 1 \phantom{+} 1 \phantom{+} 1
      \end{matrix}
    \right)
  \end{equation*}
  Here, the columns are ordered in such a way that the column number $i+1$ corresponds to the state
  $x_{i}\in\Xcal=\{0,1\}^{4}$ that is indexed by the binary representation of $i\in\{0,\dots,15\}$.

  The software package \verb|TOPCOM|\cite{TOPCOM} is used to calculate the oriented circuits of $\ker A$, from which all
  sign vectors are computed by composition.  Up to symmetry there are 73 different sign vectors occuring in $\ker A$.
  Here, the symmetry of the model is generated by the permutations of the four binary units and the relabelings
  $0\leftrightarrow 1$ of each unit.

  From these 73 sign vectors only 20 satisfy condition~{\ref{it:Var0}.} of Proposition~\ref{prop:critpts}.  The sign
  vectors of small support are easy to handle: There are two sign vectors $\sigma_{1},\sigma_{2}$ whose support has
  cardinality eight.  They are in fact oriented circuits, which implies that, up to normalization, there are two unique
  elements $u_{1},u_{2}\in\ker A$ such that $\sgn(u_{i})=\sigma_{i}$, $i=1,2$.  They satisfy $\Dbarr(u_{i}) = 0$, so
  they are surely not global maximizers.

  There are three sign vectors whose support has cardinality twelve.  Let $\sigma$ be one of these.  Then the
  restriction $\supp(u)\subseteq\supp(\sigma)$ selects a two-dimensional subspace of $\ker A$, and it is easy to see
  that $\Dbarr = 0$ on this subspace.

  There remain 15 sign vectors that have a full support.  For every such sign vector~$\sigma$ the system of the
  algebraic equations in $I_{1}^{\sigma}$ and $I_{2}^{\sigma}$ has to be solved.  To reduce the number of equations and
  the number of variables one may parametrize the solution set $\ker_{\Cb}A$ of $I_{1}^{\sigma}$ by finding a basis
  $u_{1},\dots,u_{5}$ of $\ker A$.  Then this parametrization is plugged into the equations of $I_{2}^{\sigma}$.  Some
  of these systems are at the limit of what today's desktop computer can handle.  Therefore care has to be taken how to
  formulate these equations.  The general strategy is the following:
  \label{page:algorithm}
  \begin{enumerate}
  \item At first, compute a basis $v_{1},\dots,v_{k-1}$ of $K_{\Zb}^{\sigma}$ by using a Gram-Schmidt-like algorithm:
    Renumber the $u_{i}$ such that $d_{\sigma}(u_{5})\neq 0$ and let
    \begin{equation}
      v_{i}:= \frac{d_{\sigma}(u_{5})}{g} u_{i} - \frac{d_{\sigma}(u_{i})}{g} u_{5},
    \end{equation}
    where $g = \gcd(d_{\sigma}(u_{5}),d_{\sigma}(u_{i}))$.
  \item Let $I$ be the ideal in the variables $\lambda_{1},\dots,\lambda_{5}$ generated by the equations
    \begin{equation}
      \prod_{x:v_{i}>0} u(x)^{v_{i}(x)} - \prod_{x:v_{i}<0} u(x)^{-v_{i}(x)}, \text{ for all }i=1,\dots,4,
    \end{equation}
    where $u(x) = \sum_{i=1}^{5}\lambda_{i}u_{i}(x)$.
  \item Compute the saturation $J = (I:\prod_{x\in\Xcal}u(x)^{\infty})$.
  \item Compute the primary decomposition of $J$.
  \end{enumerate}
  Note that the ideal $I$ in the second step corresponds to the ideal $I_{2}(B)$ defined above for the basis
  $B=\{v_{1},\dots,v_{4}\}$, where the variables $u(x)$ have been restricted to the linear subspace $\ker_{\Cb} A$.
  The ideal $J$ obtained by saturation in the third step is then independent of $B$.

  Unfortunately, this simple algorithm does not work for all sign vectors.  Some further tricks are needed to compute
  the primary decomposition within a reasonable time.

  A basis of $\ker A$ is given by the rows $u_{1},\dots,u_{5}$ of the matrix
  \begin{equation*}
    \left(
      \begin{matrix}
        \;1 \phantom{+} {-1} \phantom{+} {-1} \phantom{+} \phantom{-}1 \phantom{+} {-1} \phantom{+} \phantom{-}1 \phantom{+} \phantom{-}1 \phantom{+} {-1} \phantom{+} {-1} \phantom{+} \phantom{-}1 \phantom{+} \phantom{-}1 \phantom{+} {-1} \phantom{+} \phantom{-}1 \phantom{+} {-1} \phantom{+} {-1} \phantom{+} \phantom{-}1 \\
        \;1 \phantom{+} \phantom{-}0 \phantom{+} {-1} \phantom{+} \phantom{-}0 \phantom{+} {-1} \phantom{+} \phantom{-}0 \phantom{+} \phantom{-}1 \phantom{+} \phantom{-}0 \phantom{+} {-1} \phantom{+} \phantom{-}0 \phantom{+} \phantom{-}1 \phantom{+} \phantom{-}0 \phantom{+} \phantom{-}1 \phantom{+} \phantom{-}0 \phantom{+} {-1} \phantom{+} \phantom{-}0 \\
        \;1 \phantom{+} {-1} \phantom{+} \phantom{-}0 \phantom{+} \phantom{-}0 \phantom{+} {-1} \phantom{+} \phantom{-}1 \phantom{+} \phantom{-}0 \phantom{+} \phantom{-}0 \phantom{+} {-1} \phantom{+} \phantom{-}1 \phantom{+} \phantom{-}0 \phantom{+} \phantom{-}0 \phantom{+} \phantom{-}1 \phantom{+} {-1} \phantom{+} \phantom{-}0 \phantom{+} \phantom{-}0 \\
        \;1 \phantom{+} {-1} \phantom{+} {-1} \phantom{+} \phantom{-}1 \phantom{+} \phantom{-}0 \phantom{+} \phantom{-}0 \phantom{+} \phantom{-}0 \phantom{+} \phantom{-}0 \phantom{+} {-1} \phantom{+} \phantom{-}1 \phantom{+} \phantom{-}1 \phantom{+} {-1} \phantom{+} \phantom{-}0 \phantom{+} \phantom{-}0 \phantom{+} \phantom{-}0 \phantom{+} \phantom{-}0 \\
        \;1 \phantom{+} {-1} \phantom{+} {-1} \phantom{+} \phantom{-}1 \phantom{+} {-1} \phantom{+} \phantom{-}1 \phantom{+} \phantom{-}1 \phantom{+} {-1} \phantom{+} \phantom{-}0 \phantom{+} \phantom{-}0 \phantom{+} \phantom{-}0 \phantom{+} \phantom{-}0 \phantom{+} \phantom{-}0 \phantom{+} \phantom{-}0 \phantom{+} \phantom{-}0 \phantom{+} \phantom{-}0
      \end{matrix}
    \right).
  \end{equation*}
  This basis has the following property: Let $u=\sum_{i=1}^{5}\lambda_{i}u_{i}$.  If $\lambda_{j}=0$ for some
  $j=2,3,4,5$, then $\ol D(u) = 0$.  The reason is that if one $\lambda_{j}$ vanishes, then it is easy to see that there
  is bijection between the positive and negative entries of $u$ such that corresponding entries have the same absolute
  value.  This implies that, in order to determine the global maximizer of this model one may saturate $J$ by the
  product $\lambda_{2}\lambda_{3}\lambda_{4}\lambda_{5}$.

  Replacing $J$ by $\left(J:(\lambda_{2}\lambda_{3}\lambda_{4}\lambda_{5})^{\infty}\right)$ makes it possible to solve
  all but one system of equations.  For the last sign vector $\sigma$ a special measure is necessary: The complexity of
  the above algorithm depends on the chosen basis $v_{1},v_{2},v_{3},v_{4}$ of $K^{\sigma}_{\Zb}$.  The $\ell_{1}$-norm
  of each vector $v_{i}$ equals twice the degree of the corresponding equation.  Thus it is advisable to choose the
  vectors $v_{1},v_{2},v_{3},v_{4}$ as short as possible.  As a first approximation, one may try to use a basis of
  \emph{circuit vectors}, i.e., vectors whose support is minimal.  This approach provides a basis
  $v_{1},v_{2},v_{3},v_{4}$ for $K_{\Zb}^{\sigma}$ of the last sign vector, such that the rest of the algorithm sketched
  above works.

  The calculations were performed with the help of Singular\cite{Singular}.  The primary decompositions were done using
  the algorithm of Gianni, Trager and Zacharias (GTZ) implemented in the library \verb|solve.lib|.  Analyzing the
  results yields the following theorem, confirming a conjecture by Thomas Kahle (personal communication):
\end{ex}
\begin{thm}
  \label{thm:4-2bin}
  The binary 4-2 model has, up to symmetry, a unique maximizer of the \KL, which is the uniform distribution over the
  states $0001$, $0010$, $0100$, $1000$ and $1111$.  The maximal value of $\ol D$ is %
  $
  \log 3 - \frac{1}{3}\log 5 \approx 0.56213298$, it is reached at
  \begin{equation}
    u = \frac{1}{15} (-5, 3, 3, -1, 3, -1, -1, -1, 3, -1, -1, -1, -1, -1, -1, 3).
  \end{equation}
  The maximum value of the $\DE$ is %
  $
  = \log(1 + 3\cdot 5^{\frac{1}{3}})\approx 1.0132035$.
\end{thm}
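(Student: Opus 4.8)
The proof will carry out, for the matrix $A_{4-2}$ of Example~\ref{ex:4-2}, the program set up in Section~\ref{sec:algo}. The plan is this. By Theorem~\ref{thm:Dualmaxi} it is enough to find all global maximizers of $\Dbarr$ on $\{u\in\ker A:d_{u}=1\}$: the positive part $P_{+}$ of such a maximizer $u=P_{+}-P_{-}$ is then a global maximizer of $\DE$, whose value is recovered from Lemma~\ref{lem:formulae} as $\log\bigl(1+\exp(\Dbarr(u))\bigr)$. Since every local maximizer of $\Dbarr$ is a quasi-critical point (Proposition~\ref{prop:critpts}) and the quasi-critical points are organized by sign vector (Proposition~\ref{prop:critpoly}), the computation splits into three parts: enumerate the sign vectors $\sigma$ of $\ker A$; for each $\sigma$ solve the polynomial system defining $X^{\sigma}=X_{1}^{\sigma}\cap X_{2}^{\sigma}$; and use Theorem~\ref{thm:Dsigmaind} to read off the value of $\Dbarr$ from a single point of each irreducible component of $X^{\sigma}$. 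The largest of these values, attained at a component that genuinely contains a real point of sign pattern $\sigma$, is the maximum of $\Dbarr$.

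First I would do the combinatorial reduction. Computing the oriented circuits of $\ker A_{4-2}$ with \texttt{TOPCOM} and composing them produces all sign vectors; modulo the symmetry group (generated by $S_{4}$ acting on the coordinates together with the four bit-flips) there are $73$. Condition~\ref{it:Var0}.\ of Proposition~\ref{prop:critpts} is linear in $v$ and depends on $\sigma$ only through $\supp(\sigma)$, so testing it on a basis of $\ker A$ is cheap and discards all but $20$ sign vectors. The sign vectors of small support are then disposed of by hand: for the two support-$8$ circuits the subspace $\{u:\supp(u)\subseteq\supp(\sigma)\}$ is one-dimensional with $\Dbarr(u)=0$, and for the three support-$12$ sign vectors one checks that $\Dbarr$ vanishes identically on the corresponding two-dimensional subspace. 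This leaves the $15$ full-support sign vectors, where the real work lies.

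For each full-support $\sigma$ I would work algebraically. Choose a basis $u_{1},\dots,u_{5}$ of $\ker A$ and write $u=\sum_{i}\lambda_{i}u_{i}$, which parametrizes $X_{1}^{\sigma}=\ker_{\Cb}A$; substitute this into the binomials of Proposition~\ref{prop:critpoly} attached to a basis $v_{1},\dots,v_{4}$ of $K_{\Zb}^{\sigma}$, getting an ideal in $\Cb[\lambda_{1},\dots,\lambda_{5}]$. All defining equations being homogeneous (Remark~\ref{rem:projvar}), the normalization $d_{u}=1$ can be imposed last; I would saturate the ideal by $\prod_{x}u(x)$ to remove the components inside coordinate hyperplanes (which cannot meet the orthant $\sgn=\sigma$) and, using that for the chosen basis $\lambda_{j}=0$ for some $j\in\{2,3,4,5\}$ forces $\Dbarr(u)=0$, also by $\lambda_{2}\lambda_{3}\lambda_{4}\lambda_{5}$, after which the ideal no longer depends on the basis $B=\{v_{1},\dots,v_{4}\}$. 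A primary decomposition (GTZ algorithm, e.g.\ in \texttt{Singular}) then exhibits the irreducible components that matter. From each component I would pick one point $u$ and evaluate $\sum_{x\in\Ycal}\Re(u(x))\log\frac{|u(x)|}{r(x)}$, which by Theorem~\ref{thm:Dsigmaind} equals $\Dbarr$ at every real sign-$\sigma$ point of that component. Comparing these finitely many numbers over all $\sigma$ yields the candidate value $\log 3-\tfrac13\log 5$ at the stated vector $u$; one then verifies that this $u$ is real with the right signs, that its positive part is the uniform measure on $\{0001,0010,0100,1000,1111\}$, and that no other component gives a strictly larger value of the Theorem~\ref{thm:Dsigmaind} formula --- so that the positive-dimensional components, for which a real sign-$\sigma$ point was not produced, cannot win. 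Since the maximum is then attained at a single point of a single component up to symmetry, uniqueness follows, and Theorem~\ref{thm:Dualmaxi} together with Lemma~\ref{lem:formulae} gives the value $\log\bigl(1+\exp(\log 3-\tfrac13\log 5)\bigr)$ of $\DE$.

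The main obstacle is computational: several of the $15$ systems are near the limit of Gr\"obner-basis methods, so the proof depends on setting them up economically --- choosing a short basis $v_{1},\dots,v_{4}$ of $K_{\Zb}^{\sigma}$, preferably from circuit vectors (the $\ell_{1}$-norm of $v_{i}$ being twice the degree of the $i$-th binomial), and performing the two saturations above before attempting the primary decomposition; for one stubborn sign vector the circuit-basis choice is what makes the computation terminate. A secondary, conceptual point is the need to certify, for any component whose Theorem~\ref{thm:Dsigmaind} value could rival the maximum, that it actually contains a real point of sign pattern $\sigma$; here all such components happen to be zero-dimensional, so this is automatic, but it is the step that could in principle obstruct the argument.
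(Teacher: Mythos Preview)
Your proposal is correct and follows essentially the same approach as the paper: the proof of Theorem~\ref{thm:4-2bin} is precisely the computation outlined in Example~\ref{ex:4-2}, and you reproduce its steps faithfully --- the combinatorial reduction $73\to20\to15$ via \texttt{TOPCOM} and condition~\ref{it:Var0}., the by-hand elimination of the support-$8$ and support-$12$ cases, the parametrization of $\ker A$ and the binomial equations from a basis of $K_{\Zb}^{\sigma}$, the two saturations (by $\prod_{x}u(x)$ and by $\lambda_{2}\lambda_{3}\lambda_{4}\lambda_{5}$), the circuit-vector basis for the recalcitrant sign vector, and the GTZ primary decomposition in \texttt{Singular}, with Theorem~\ref{thm:Dsigmaind} used to read off $\Dbarr$ from one point per component. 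Your closing remarks on certifying real sign-$\sigma$ points for positive-dimensional components and on deducing uniqueness make explicit what the paper leaves implicit in the phrase ``Analyzing the results yields the following theorem''.
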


\section{Computing the projection points}
\label{sec:projpts}

The theory of this paper motivates a second method for computing the maximizers of $\DE$, which is more elementary than
solving the critical equations.  However, knowing the critical equations sheds new light on this method.

Let $P_{+}$ be a projection point and construct $P_{-}$ as in section~\ref{sec:projpoints}.  Then $u = P_{+}-P_{-}$ and
the common $rI$-projection $P_{\Ecal}$ of $P_{+}$ and $P_{-}$ satisfy
\begin{equation}
  \label{eq:projprop}
  u(x) =
  \begin{cases}
    \frac{1}{\mu} P_{\Ecal}(x) & \text{ if }x\in\Zcal,
    \\
    -\frac{1}{1-\mu} P_{\Ecal}(x) & \text{ if }x\notin\Zcal.
  \end{cases}
\end{equation}
On the other hand, $P_{\Ecal}$ lies in the closure of the exponential family.  Suppose that $P_{\Ecal}$ has full
support.  Then the exponential parameterization~\eqref{eq:expfam} implies that there exist
$\alpha_{1},\dots,\alpha_{h}>0$
\begin{equation}
  \label{eq:monomparam}
  P_{\Ecal}(x) = \frac{r(x)}{Z_{\alpha}} \prod_{i=1}^{h} \alpha_{i}^{A_{i,x}}.
\end{equation}

Assume that $\sigma=\sgn(P_{+}-P_{-})$ has full support and define a $(h+1)\times\Xcal$-matrix $A^{\sigma}$ as follows:
Take the matrix $A$ and add a zeroth row with entries
\begin{equation}
  A^{\sigma}_{0,x} := 1 - \sigma_{x} \in \{0,1\}.
\end{equation}
Then equations~\eqref{eq:projprop} and~\eqref{eq:monomparam} together show that $u$ has the form
\begin{equation}
  \label{eq:monomparamsigma}
  u(x) = r(x)\prod_{i=0}^{h}\alpha_{i}^{A^{\sigma}_{i,x}}
\end{equation}
for suitably chosen $\alpha_{i}$.  Here, $\alpha_{0} = -\frac{\mu}{1-\mu} < 0$, and all the other parameters are
positive.  The normalization can be achieved since the row span of $A$ contains the constant vector.  Thus the
projection points, which project into $\Ecal$, can be found by plugging the parameterization~\eqref{eq:monomparamsigma}
into the equation $Au = 0$ and solving for the $\alpha_{i}$.

Again, this method simplifies if $A$ has only integer entries.  Additionaly it is convenient to suppose that $A$ has
only nonnegative entries.  This nonnegativity requirement can always be supposed, since $A$ contains the constant row in
its row span.  In this case the parameterization~\eqref{eq:monomparamsigma} is monomial, so the equation $Au = 0$ is
equivalent to $h$ polynomial equations in the $h+1$ parameters $\alpha_{0},\dots,\alpha_{h}$.

This method is linked to the ideal $I_{2}^{\sigma}$ of the previous section.  As stated there, $I_{2}^{\sigma}$ is
related to the lattice ideal $I_{2,r}^{\sigma}$, which defines a toric variety.  Every toric variety has a monomial
``parameterization'', which induces the monomial parameterization~\eqref{eq:monomparamsigma}.  Unfortunately, in the
general case this monomial parameterization is not surjective.  However, equation~\eqref{eq:monomparamsigma} shows that
it is ``surjective enough'', at least in the case where $\sigma$ has full support.

It is possible to extend this analysis to the case where $\sigma$ does not have full support.  Let $\Ycal =
\supp(\sigma)$.  First it is necessary to parameterize the set $\Ecal^{\Ycal}$ of those probability distributions of
$\ol\Ecal$ whose support is $\Ycal$.
One solution is to find an element $r_{\Ycal}\in\ol\Ecal$ such that $\supp(r_{\Ycal}) = \Ycal$.  Then $\Ecal^{\Ycal}$
equals the exponential family over the set $\Ycal$ with reference measure $r_{\Ycal}$ whose sufficient statistics matrix
$A_{\Ycal}$ consists of those columns of $A$ corresponding to $\Ycal\subseteq\Xcal$.  This gives a monomial
parameterization of $\Ecal^{\Ycal}$ with at most $h$ parameters.

The equations obtained from $Au=0$ by plugging in a monomial parameterization for $u$ can be solved by primary
decomposition.  Every solution $(\alpha_{0},\dots,\alpha_{h})$ yields a point of~$X^{\sigma}$.
Theorem~\ref{thm:Dsigmaind} applies in this context.

\begin{ex}
  The above ideas can be used to find the maximizers of the independence model of three random variables of
  cardinalities 2, 3 and 3.  This example is particularly interesting, since the global maximizers are known for those
  independence models where the cardinality of the state spaces of the random variables satisfy an
  inequality\cite{AyKnauf06:Maximizing_Multiinformation}.  The cardinalities 2, 3 and 3 are the smallest set of
  cardinalities that violate this inequality.

  A sufficient statistics of the model is given by
  \begin{equation}
    A_{2-3-3} =
    \left(
    \begin{tabular}{*{18}{c}}
      1 & 1 & 1 & 1 & 1 & 1 & 1 & 1 & 1 & 1 & 1 & 1 & 1 & 1 & 1 & 1 & 1 & 1 \\
      1 & 0 & 0 & 1 & 0 & 0 & 1 & 0 & 0 & 1 & 0 & 0 & 1 & 0 & 0 & 1 & 0 & 0 \\
      0 & 1 & 0 & 0 & 1 & 0 & 0 & 1 & 0 & 0 & 1 & 0 & 0 & 1 & 0 & 0 & 1 & 0 \\
      1 & 1 & 1 & 0 & 0 & 0 & 0 & 0 & 0 & 1 & 1 & 1 & 0 & 0 & 0 & 0 & 0 & 0 \\
      0 & 0 & 0 & 1 & 1 & 1 & 0 & 0 & 0 & 0 & 0 & 0 & 1 & 1 & 1 & 0 & 0 & 0 \\
      1 & 1 & 1 & 1 & 1 & 1 & 1 & 1 & 1 & 0 & 0 & 0 & 0 & 0 & 0 & 0 & 0 & 0
    \end{tabular}
    \right).
  \end{equation}
  The states are numbered in the ternary representation from $000$ to $122$, where the ``highest'' random variable only
  takes two values.  The dimension of the model is $d = 5$ and the state space has cardinality 18.  Thus $\dim\ker A =
  18 - 5 - 1 = 12$.  The symmetry group of the model is generated by the permutation of the two random variables of
  cardinality three and by the permutations within the state space of each random variable.

  The cocircuits can be computed by TOPCOM.  Testing all $3^{18}$ possible sign vectors of length 18 shows that there
  are $182\,796$ non-zero sign vectors in $\ker A$ (up to symmetry).  Checking the support condition~{\ref{it:Var0}.}
  leaves 975 sign vectors.  Excluding all sign vectors where the support of both the negative and the positive part
  exceeds 6 (cf.~Theorem~\ref{thmA:Matus}) reduces the problem to 240 sign vectors.

  The 72 sign vectors that do not have full support can be treated as in the previous section.  For the 168 sign
  vectors that have full support the corresponding systems of equations consist of $\dim\ker A - 1 = 11$ equations of
  $\dim\ker A = 12$ variables.  These are too difficult to solve in this way, but they can be treated using the method
  proposed in this section, which ``only'' requires the primary decomposition of a system of $d=5$ polynomials in
  $d+1=6$ variables.
 


  The analysis was carried out with the help of Singular.  It proved to be advantageous to use the algorithm of
  Shimoyama and Yokoyama (SY) from the library \verb|solve.lib|.  The following result was obtained:
\end{ex}
\begin{thm}
  The maximal value of $\DE$ for the independence model of cardinalities 2, 3 and 3 equals $\log(3+2\sqrt{2})\approx
  1.7627472$, and the maximal value of $\Dbarr$ is $\log(2(1+\sqrt{2}))\approx 1.5745208$.  Up to symmetry there is a
  unique global maximizing probability distribution
  \begin{equation}
    (1 - \frac{\sqrt2}2)(\delta_{012} + \delta_{020}) + (\sqrt2 - 1)\delta_{100}.
  \end{equation}
\end{thm}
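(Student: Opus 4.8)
\section*{Proof proposal}

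The plan is to reduce the assertion to a finite symbolic computation, following the machinery of Sections~\ref{sec:algo} and~\ref{sec:projpts}. By Theorem~\ref{thm:Dualmaxi} it suffices to determine the global maximizers of $\Dbarr$ subject to $d_{u}=1$ on $\ker A$: the positive part of such a $u$ is then the global maximizer of $D(\cdot||\Ecal)$, and the two optimal values are linked by $\max_{P}D(P||\Ecal) = \log\big(1+\exp(\max_{u}\Dbarr(u))\big)$, which is precisely the chain of equalities established in the proof of Theorem~\ref{thm:Dualmaxi} together with~\eqref{eq:KL} and the identity $\Dbarr(u)=\Hr(P_{-})-\Hr(P_{+})$ for $u=P_{+}-P_{-}$ with $d_{u}=1$. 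Hence it is enough to show $\max\Dbarr = \log(2(1+\sqrt{2}))$ and to exhibit the maximizer; the value $\log(3+2\sqrt{2})$ for the information divergence then follows because $1+2(1+\sqrt{2})=3+2\sqrt{2}$.

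First I would enumerate, modulo the symmetry group of the model --- generated by the transposition of the two cardinality-three random variables and by the permutations of the states within each random variable --- all nonzero sign vectors of $\ker A$: computing the cocircuits with \verb|TOPCOM| and imposing the orthogonality property on all $3^{18}$ candidates gives $182\,796$ of them. I would then prune successively: condition~{\ref{it:Var0}.} of Proposition~\ref{prop:critpts}, being linear in $v$ and hence checkable on a basis of $\ker A$, cuts the list to $975$; and since the positive part of a global maximizer of $\Dbarr$ maximizes $D(\cdot||\Ecal)$ and therefore, by Theorem~\ref{thmA:Matus}, has support of cardinality at most $\dim\Ecal+1 = 6$, every $\sigma$ (up to sign) whose positive and negative supports both exceed $6$ may be discarded, leaving $240$ sign vectors, of which $72$ do not have full support and $168$ do.

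For the $72$ sign vectors without full support I would use the method of Section~\ref{sec:projpts}: pick $r_{\Ycal}\in\ol\Ecal$ with $\supp(r_{\Ycal})=\Ycal:=\supp(\sigma)$, so $\Ecal^{\Ycal}$ is an exponential family on $\Ycal$ with at most $h$ parameters, substitute the resulting monomial parameterization of $u$ into $Au=0$, and decompose the ideal so obtained. For the $168$ full-support sign vectors the direct ideal $I_{1}^{\sigma}+I_{2}^{\sigma}$ ($11$ equations in $12$ variables) is too large, so I would again pass to the monomial parameterization~\eqref{eq:monomparamsigma}, which reduces the task to the primary decomposition of only $d=5$ polynomials in the $d+1=6$ parameters $\alpha_{0},\dots,\alpha_{5}$. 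For each irreducible component of each variety $X^{\sigma}$ so obtained I would take a witness point $u$ and, by Theorem~\ref{thm:Dsigmaind}, evaluate $\sum_{x\in\Ycal}\Re(u(x))\log\frac{|u(x)|}{r(x)}$, which equals the value of $\Dbarr$ at every real sign-respecting point of that component; maximizing these numbers over all components and all $\sigma$ yields $\log(2(1+\sqrt{2}))$, attained --- after normalization to $d_{u}=1$ --- by the $u$ whose positive part is $(1-\tfrac{\sqrt{2}}{2})(\delta_{012}+\delta_{020})+(\sqrt{2}-1)\delta_{100}$.

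The main obstacle is making the primary decompositions terminate and certifying the resulting list of values. This forces several precautions: choosing short ($\ell_{1}$-minimal, ideally circuit) spanning vectors of $K^{\sigma}_{\Zb}$ to keep the binomial degrees low; saturating each ideal by $\prod_{x\in\Ycal}u(x)$ (and, given a favorable basis of $\ker A$, by the coordinates $\lambda_{j}$ that force $\Dbarr=0$) to strip off the spurious components lying in coordinate planes; and selecting a decomposition routine that actually finishes --- here the GTZ and Shimoyama--Yokoyama algorithms from \verb|Singular|'s \verb|solve.lib|. A more delicate point is that Theorem~\ref{thm:Dsigmaind} gives the correct value of $\Dbarr$ on a component only when that component genuinely contains a real point with $\sgn=\sigma$; for any component whose witness value rivals the candidate maximum one must separately verify or rule out the existence of such a point, and finally confirm that the maximizing component collapses, under the symmetry group, to the single distribution $(1-\tfrac{\sqrt{2}}{2})(\delta_{012}+\delta_{020})+(\sqrt{2}-1)\delta_{100}$, which yields the uniqueness statement.
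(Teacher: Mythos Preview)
Your proposal is correct and follows essentially the same computational pipeline as the paper: enumerate sign vectors via cocircuits, prune by the linear condition~\ref{it:Var0} and by the support bound from Theorem~\ref{thmA:Matus}, then solve the remaining cases algebraically and compare values via Theorem~\ref{thm:Dsigmaind}. The one small divergence is that the paper handles the $72$ non-full-support sign vectors by the critical-equation machinery of Section~\ref{sec:algo} (``as in the previous section''), reserving the monomial parameterization of Section~\ref{sec:projpts} for the $168$ full-support cases, whereas you propose to use the Section~\ref{sec:projpts} method throughout; both routes are valid, and the paper reports that the Shimoyama--Yokoyama algorithm was the one that actually terminated here.
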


In order to compare the two methods of finding the maximiers of $\Dbarr$ resp.~$\DE$ presented in this section and in
the last section let $d$ be the dimension of the model and let $r=\dim\ker A$.  All algorithms are most efficient if $A$
is chosen such that $h = d+1$.  Then, for any sign vector $\sigma$ with full support, the algorithm on
page~\ref{page:algorithm} starts with $r-1$ equations (corresponding to a basis of $K_{\Cb}^{\sigma}$) in $r$ variables
$\lambda_{1},\dots,\lambda_{r}$, which are then saturated.  On the other hand, the method in this section starts with
the $d+1$ equations $A u = 0$ in the $d+2$ variables $\alpha_{0},\dots,\alpha_{d+1}$.  Thus, generically, the first
method should perform better when the codimension of the model is small, while the second method should perform better
when the dimension of the model is small.

\section{Conclusions}
\label{sec:conclusions}

In this work a new method for computing the maximizers of the \KL{} from an exponential family $\Ecal$ has been
presented.  The original problem of maximizing $\DE$ over the set of all probability distributions is transformed into
the maximization of a function $\Dbarr$ over $\ker A$, where $A$ is the sufficient statistics of $\Ecal$.  It has been
shown that the global maximizers of both problems are equivalent.
Furthermore, every local maximizer of $\DE$ yields a maximizer of $\Dbarr$.  At present it is not known whether the
converse statement also holds.

The two main advantages of the reformulation are:
\begin{enumerate}
\item A reduction of the dimension of the problem.
\item The function $\Dbarr$ can be computed by a formula.
\end{enumerate}
If $\Ecal$ has codimension one, then the first advantage is most visible.  Even this simple case can be useful in order
to obtain examples of maximizers having specific properties.

The maximizers of $\Dbarr$ can be computed by solving the critical equations.  These equations are nice if they are
considered separately for every sign vector $\sigma$ occuring in $\ker A$.  There are some conditions which allow to
exclude certain sign vectors from the beginning.  If the matrix $A$ contains only integer entries, then the critical
equations are algebraic, once the sign vector is fixed.  In this case tools from commutative algebra can be used to
solve these equations.

A second possibility is to compute the points satisfying the projection property.  If $A$ is an integer matrix and if
the sign vector is fixed, then one obtains algebraic equations which are related to the critical equations of $\ol D$.
This method is more appropriate for exponential families of small dimension.

Of course, a problem with these two approaches is that every sign vector needs to be treated separately, and their
number grows quickly.  By contrast, the problem of finding the maximizers of $\DE$ becomes a smooth problem if one
restricts the support of the possible maximizers.  In general the set of possible support sets is much smaller than the
set of sign vectors.
%
Still, two examples have been given where the maximizers where not known before and where the separate analysis of each
sign vector was feasible.




\appendix

\section*{Acknowledgement}

I would like to thank František Matúš for his interest and many helpful remarks.  I also thank Thomas Kahle, Bastian
Steudel and Nihat Ay for numerous inspiring discussions.

\bibliographystyle{plain}
\bibliography{general}

\end{document}